\newcommand{\md}[1]{\ensuremath{\ (\text{mod } #1)}}
\title{Scalable Spider Nests (...Or How to Graphically Grok Transversal Non-Clifford Gates)}
\author{Aleks Kissinger
\institute{University of Oxford}
\email{aleks.kissinger@ox.ac.uk} \and
John van de Wetering
\institute{University of Amsterdam}
\email{john@vdwetering.name}
}
\begin{document}
\maketitle

\begin{abstract}
This is the second in a series of ``graphical grokking'' papers in which we study how stabiliser codes can be understood using the ZX-calculus. In this paper we show that certain complex rules involving ZX-diagrams, called spider nest identities, can be captured succinctly using the \emph{scalable} ZX-calculus, and all such identities can be proved inductively from a single new rule using the Clifford ZX-calculus. This can be combined with the ZX picture of CSS codes, developed in the first ``grokking'' paper, to give a simple characterisation of the set of all transversal diagonal gates at the third level of the Clifford hierarchy implementable in an arbitrary CSS code.
\end{abstract}

\noindent The ZX-calculus~\cite{CD1,CD2} is a useful tool for expressing and reasoning about quantum computations. It represents computations, such as quantum circuits or measurement patterns, as certain labelled open graphs called \textit{ZX-diagrams}, subject to a collection of rewrite rules that can be used to transform and simplify diagrams. While the literature tends to talk about ``the'' ZX-calculus, there are actually several calculi of increasing power. One of the simpler versions is what we will call here the \textit{Clifford ZX-calculus}, which is complete for Clifford ZX-diagrams, i.e. diagrams whose phase parameters are restricted to integer multiples of $\pi/2$~\cite{BackensCompleteness}. These give a natural generalisation of Clifford circuits, and the Clifford ZX-calculus admits many efficient algorithms, e.g. for reducing Clifford diagrams to normal form, equality checking, and computing arbitrary measurement amplitudes. In some sense, it is the graphical counterpart to stabiliser theory (see e.g.~\cite{BackensCompleteness,borghans2019zx}), so it gives a natural setting for studying quantum error correction~\cite{kissinger2022grok}.

However, the Clifford ZX-calculus, unlike some of its more powerful counterparts~\cite{HarnyCompleteness,jeandel2019completeness} is known to be incomplete for ZX-diagrams whose phases are not all multiples of $\pi/2$. An interesting family of equations that are not provable in the Clifford ZX-calculus are the \textit{spider-nest identities}, where elaborate configurations of \textit{phase gadgets} can collectively cancel out with each other~\cite{debeaudrap2020spidernest2}. For example, connecting a $\pi/4$ phase gadget to all $2^4-1 = 15$ non-empty subsets of 4 qubits is equal to the identity:
\begin{equation}\label{eq:sn-1-4}
  \tikzfig{sn-1-4}
\end{equation}
While this might seem like a very specific and complicated rule, such identities have already been used to great effect for T-gate optimisation, originally in the language of phase polynomials~\cite{heyfron2018efficient,amy2016t} and later explicitly as ZX-diagram rules~\cite{debeaudrap2020spidernest2}.

The starting point for this paper is the observation that all spider-nest identities can be succinctly characterised using certain boolean matrices called \textit{triorthogonal} matrices, i.e.~those where the Hamming weight of the product of any $\ell\leq 3$ columns is zero modulo $2^\ell$. Triorthogonal matrices have been used extensively in the study of transversal gates for quantum error-correcting codes and magic state distillation~\cite{bravyi2005universal,bravyi2012magic,nezami2022classification}. This connection is not really new, and could probably best be described as folklore, as it really just explicitly connects the dots between two important results in the literature: the equivalence between T-count optimisation and Reed-Muller decoding~\cite{amy2016t} and the equivalence between triorthogonal matrices and certain Reed-Muller codewords~\cite{nezami2022classification}.

We show that the equivalence between spider nest identities and triorthogonal matrices can be made explicit and fully graphical, with the help of the \textit{scalable ZX-calculus}~\cite{SZXCalculus}. This extension to ZX notation enables one to work with entire registers of qubits at once to represent arbitrary connectivity between components using boolean biadjacency matrices. Notably, all spider nest equations take a simple form, for some triorthogonal matrix $M$:
\begin{equation}\label{eq:triortho-map}
\tikzfig{triortho-map}
\end{equation}

We show that we only have to assume one further rule, a  variant of Eq.~\eqref{eq:sn-1-4}, beyond the standard Clifford ZX-calculus in order to prove all identities of the form of Eq.~\eqref{eq:triortho-map} (cf.~Theorem~\ref{thm:diag-complete}). This in turn gives us a complete calculus for CNOT+$T$ circuits (cf.~Corollary~\ref{cor:CNOT+T}).

The graphical form for spider-nest identities~\eqref{eq:triortho-map} is particularly handy when used in the context of error correcting codes. As shown in the first ``graphical grokking'' paper~\cite{kissinger2022grok}, the encoding map of a CSS code can be represented as a ZX-diagram in a certain normal form. By ``pushing'' maps through the encoder, one can compute the effect of a physical map on the logical qubits or vice-versa. 
Of particular importance for fault-tolerant computation are those logical maps that can be implemented by a \textit{transversal} operation on the physical qubits, which in the setting we will consider are operations implemented by a tensor product of single-qubit unitaries. Codes with rich sets of transversal logical gates are interesting both for supporting computation on their own encoded qubits and as part of \textit{magic state distillation} protocols, used to boost (non-universal) fault-tolerant computation in other codes to universality~\cite{Litinski2019gameofsurfacecodes}.

Many characterisations of transversal gates exist in the literature for specific classes of gates and/or codes~\cite{bravyi2005universal,bravyi2012magic,campbell2017unified,rengaswamy2020optimality,vuillot2022quantum}. The one we give here is essentially equivalent to the one given in~\cite{webster2023transversal} concerning diagonal gates in the Clifford hierarchy, although we will focus on just the third level of the Clifford hierarchy for the sake of simplicity. Namely, for a fixed CSS code, we give a complete classification for the set of transversal gates whose logical action is in $\mathcal D_3$, the diagonal unitaries on the third level of the Clifford hierarchy (Theorem~\ref{thm:char-transversal}). While we focus on $\mathcal D_3$, the method we show translates straightforwardly to $\mathcal D_{\ell}$, the diagonal unitaries of the $\ell$th level of the Clifford hierarchy, and phase gadgets with a $\pi/2^{\ell-1}$ phase, as we will remark in the conclusion. A notable feature of our characterisation is not so much the result itself, but the proof technique, which demonstrates the interplay between the CSS code and the triorthogonal structure that needs to be present in it. Both of these can be treated uniformly using scalable notation, and the graphical rules allow one to easily see (and hopefully grok) how the stabiliser and non-stabiliser aspects of the computation interact.

\section{Preliminaries}

The basic building blocks of ZX-diagrams are spiders, which come in two varieties, \textit{Z spiders} and \textit{X spiders}, defined respectively relative to the eigenbases of the Pauli Z and Pauli X operators.
\begin{equation}\label{eq:spiders}
  \begin{array}{ccccl}
    \zsp{m}{n}{\alpha}
    & \ := \ &
    \textrm{\scriptsize $m$}\!\left\{\ \ \tikzfig{z-phase-spider}\ \ \right\}\!\textrm{\scriptsize $n$}
    & \ = \ &
    \ket{0}^{\otimes n}\bra{0}^{\otimes m} + e^{i\alpha}\ket{1}^{\otimes n}\bra{1}^{\otimes m} \\[5mm]
    \xsp{m}{n}{\alpha}
    & \ := \ &
    \textrm{\scriptsize $m$}\!\left\{\ \ \tikzfig{x-phase-spider}\ \ \right\}\!\textrm{\scriptsize $n$}
    & \ = \ &
    \ket{{+}}^{\otimes n}\bra{{+}}^{\otimes m} +
    e^{i\alpha}\ket{{-}}^{\otimes n}\bra{{-}}^{\otimes m}
  \end{array}
\end{equation}
where $\<\psi|^{\otimes m}$ and $|\psi\>^{\otimes n}$ are the $m$- and $n$-fold tensor products of bras and kets, respectively, and we take the convention that $(...)^{\otimes 0} = 1$. The parameter $\alpha$ is called the \textit{phase} of a spider. If we omit the phase, it is assumed to be $0$.
In addition to spiders, we allow identity wires, swaps, cups, and caps in ZX-diagrams, which are defined as follows:
\[
  \tikz[tikzfig]{ \draw (0,0) -- (3,0); } \ \ :=\ \  \sum_i |i\>\<i|\qquad
  \tikzfig{swap} \ \ :=\ \ \sum_{ij} |ij\>\<ji| \qquad
  \tikzfig{cup} \ \ := \ \ \sum_i |ii\> \qquad
  \tikzfig{cap} \ \ := \ \ \sum_i \<ii|
\]
If all of the angles in a ZX-diagram are integer multiples of $\pi/2$, it is called a \textit{Clifford ZX-diagram}. If not, it is called a \textit{non-Clifford ZX-diagram}. There is a direct translation from Clifford+phase circuits to ZX-diagrams, where the resulting diagram is Clifford if and only if the circuit contains no non-Clifford phase gates.

ZX-diagrams have the useful property that they are invariant under arbitrary deformations and swapping input/output wires of spiders. This property is sometimes referred to as \textit{only connectivity matters}. In addition to this ``meta-rule'', the \textit{Clifford ZX-calculus} consists of the 7 rules shown in Figure~\ref{fig:zx-rules}. Notably, this set of rules is \textit{complete} for Clifford ZX-diagrams. That is, if two Clifford ZX-diagrams describe the same linear map, one can transform one into the other using the Clifford ZX-calculus. For the Clifford ZX-calculus, this transformation is furthermore efficient. Note that we presented here the rules only up to non-zero scalar factor (denoted by $\scalareq$). Scalars will not play an important role in this paper.

In addition to encodings of basic gates, a useful unitary ZX-diagram is a \textit{phase gadget}, which applies a relative phase of $\alpha$ to all of the computational basis states whose bitstring has parity~1~\cite{kissinger2019tcount}:
\[
  \tikzfig{phase-gadget} \ \ ::\ \ |x_1 \ldots x_n \> \mapsto e^{i \alpha \cdot x_1 \oplus \ldots \oplus x_n} |x_1 \ldots x_n\>
\]
Phase gadgets with arbitrary angles applied to arbitrary subsets of qubits form a spanning set for all diagonal unitaries. By restricting phases to integer multiples of $\pi/2^{r-1}$, we obtain a spanning set for all diagonal unitaries on the $r$-th level of the Clifford hierarchy~\cite{cui2017diagonal}. For example, we can represent CCZ, on the 3rd level of the Clifford hierarchy as a collection of $\pi/4$ phase gadgets:
\begin{equation}\label{eq:ccz}
  \text{CCZ} \ = \ \tikzfig{gadget-ccz}
\end{equation}
Note that we have here fused all the Z-spiders on the qubit wires together to write this circuit with multiple phase gadgets more compactly. Each individual gadget corresponds to a term in the associated phase polynomial $\phi : \mathbb B^n \to \mathbb R$, which represents the relative phases of basis elements as $\mathbb R$-linear combinations of parity functions. For example, CCZ can be represented as:
\(
  \text{CCZ} |x_1 x_2 x_3 \> = e^{i \cdot \phi} |x_1 x_2 x_3\>
\)
where $\phi(x_1, x_2, x_3) = \frac\pi4 x_1 + \frac\pi4 x_2 + \frac\pi4 x_3 - \frac\pi4 x_1 \oplus x_2 - \frac\pi4 x_2 \oplus x_3 - \frac\pi4 x_1 \oplus x_3 + \frac\pi4 x_1 \oplus x_2 \oplus x_3$. 

It is also worth noting that phase gadgets applied to the same set of qubits ``fuse'' in the sense that their angles add together, as a consequence of the rules in Figure~\ref{fig:zx-rules}:
\begin{equation}\label{eq:gadget-fusion}
  \tikzfig{gadget-fusion}
\end{equation}
This is called the \textit{gadget fusion} rule~\cite{kissinger2019tcount}.

\begin{figure}[!tb]
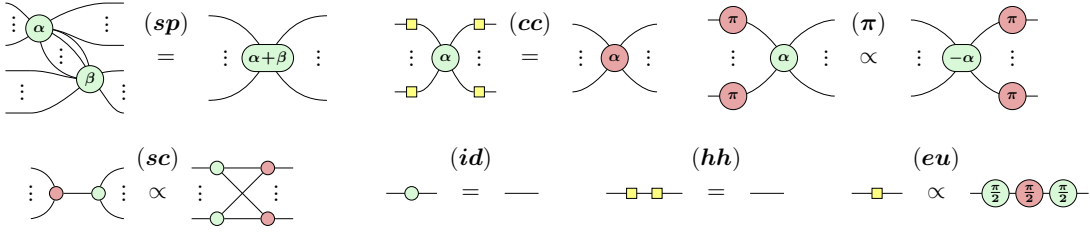

  \centering
  \tikzfig{zx-rules}
  \caption{The Clifford ZX-calculus: spider fusion \zxspider, colour change \zxcolor, $\pi$-copy \zxpi, strong complementarity \zxsc, identity \zxid, H-cancellation \zxhh, and Euler decomposition of H \zxeuler. Thanks to \zxcolor, all rules hold with their colours reversed.}
  \label{fig:zx-rules}
\end{figure}

\subsection{Spider nests and triorthogonal matrices}\label{sec:triortho-bg}

When non-Clifford angles are assumed to be arbitrary free parameters, we are unlikely to find non-trivial equations between collections of phase gadgets beyond those already provable using the Clifford ZX-calculus (cf.~\cite{vdwetering2024optimal}). However, if we assume that the non-Clifford angles take specific values, especially values of the form $\pi/2^r$ for some integer $r > 1$, more equations hold than just those provable by the Clifford ZX-calculus~\cite{amy2016t}. An important class of such rules are the \textit{spider nest identities}~\cite{debeaudrap2020spidernest2}. These are certain configurations of non-Clifford phase gadgets whose overall action is the identity, up to a global phase.

To understand the collection of all such rules, we should get some structural understanding of what is actually happening. Consider first a single phase gadget and how it acts on the computational basis:
\(
  |x_1 \ldots x_n \> \mapsto e^{i \alpha \cdot x_1 \oplus \ldots \oplus x_n} |x_1 \ldots x_n\>
\).
This action is totally determined by a phase polynomial $\phi : \mathbb F_2^k \to \mathbb R$ here given by $\phi(\vec x)=\alpha \bigoplus_j x_j$. There are two useful bases for expressing phase polynomials: the XOR basis and the monomial basis. The XOR basis represents $\phi$ as a real linear combination of functions of the form $f(\vec x) = \bigoplus_{i \in S} x_i$, where $S \subseteq \{ 1, \ldots, n \}$. The monomial, or ``AND'' basis consists of functions of the form $g(\vec x) = \Pi_{i \in S} x_i$ for some subset $S \subseteq \{ 1, \ldots, n \}$. 

We can transform from the basis of XOR functions into the basis of AND functions by using the fact that $x\oplus y = x+y - 2 x\cdot y$. From this, we can derive the $n$ variable version:
\begin{equation}\label{eq:Boolean-IFourier}
    x_1\oplus\cdots\oplus x_n = \sum_{S \subseteq [n]} (-2)^{|S|-1} \prod_{i \in S} x_i.
\end{equation}
Transforming functions between these two bases is known as the \emph{Boolean Fourier transform}.

Using Eq.~\eqref{eq:Boolean-IFourier} we can rewrite an $\mathbb R$-linear combination of XORs into a linear combination of monomials. Note that the prefactor $(-2)^{|S|-1}$ of each term grows as the degree of the monomial increases. In particular, if the coefficient of $x_1 \oplus \ldots \oplus x_n$ is some integer multiple of $\pi/4$, applying the inverse Fourier transform will result in linear terms whose coefficients are integer multiples of $\pi/4$, quadratic terms with multiples of $\pi/2$, cubic terms with multiples of $\pi$, and all terms of degree 4 or more being multiples of $2\pi$, which will vanish:
\begin{equation}
    e^{i\frac\pi4 x_1\oplus\cdots \oplus x_n} \ =\  \exp \left( i\left(\frac\pi4 \sum_j x_j - \frac\pi2 \sum_{i<j} x_i x_j + \pi \sum_{i<j<k} x_i x_j x_k - 2\pi \cdots \right)\right)
\end{equation}
Hence, each phase gadget corresponds to a collection of $T$ gates (the linear terms), CS gates (quadratic terms) and CCZ terms (cubic terms). A collection of phase gadgets then corresponds to adding together their respective $T$, CS and CCZ gates. We then see that a collection of phase gadgets implements the identity precisely when all these $T$, CS and CCZ gates cancel each other out. This happens when each single variable $x_i$ occurs $0\ \text{mod } 8$ times ($T^8 = I$), each pair of variables occurs $0\ \text{mod\ } 4$ times ($\text{CS}^4 = I$), and each triple occurs $0 \ \text{mod\ } 2$ times ($\text{CCZ}^2 = I$).

We can formalise the cancellation property as follows. Represent a collection of $\pi/4$ phase gadgets as a set of bit strings $\vec y^1, \ldots, \vec y^m$, where $y^l_i = 1$ means the $l$-th phase gadget is connected to the $i$-th wire. Then for the compositions of all $M$ phase gadgets to form an identity they need to satisfy:
\[
    \forall i: \sum_l y_i^l = 0 \md 8 \qquad
    \forall i<j: \sum_l y_i^ly_j^l = 0 \md 4 \qquad
    \forall i<j<k: \sum_l y_i^ly_j^ly_k^l = 0 \md 2
\]
Writing these $\vec y^l$ as the rows of a binary matrix, these conditions specify precisely what it means for the matrix to be \emph{triorthogonal}, namely that each column, product of pairs of columns and product of triples of columns needs to have a Hamming weight that is a multiple of respectively 8, 4 and 2~\cite{nezami2022classification}.

\begin{example}
  The collection of phase gadgets in the LHS of equation \eqref{eq:sn-1-4} corresponds to the following triorthogonal matrix, whose rows are all the non-zero length 4 bit strings:
\begin{equation}\label{eq:sn-1-4-matrix}
    \begin{pmatrix}
        0 & 0 & 0 & 0 & 0 & 0 & 0 & 1 & 1 & 1 & 1 & 1 & 1 & 1 & 1 \\
        0 & 0 & 0 & 1 & 1 & 1 & 1 & 0 & 0 & 0 & 0 & 1 & 1 & 1 & 1 \\
        0 & 1 & 1 & 0 & 0 & 1 & 1 & 0 & 0 & 1 & 1 & 0 & 0 & 1 & 1 \\
        1 & 0 & 1 & 0 & 1 & 0 & 1 & 0 & 1 & 0 & 1 & 0 & 1 & 0 & 1 
      \end{pmatrix}^{T}
\end{equation}
\end{example}

If we instead require the weaker condition that each of these properties holds just modulo $2$, then we get the notion of a \emph{semi-triorthogonal} matrix. In that case, we can abbreviate the 3 conditions into one, where we no longer require $i, j, k$ to be distinct:
\begin{equation}\label{eq:semi-triortho}
    \forall i,j,k: \sum_l y_i^ly_j^ly_k^l = 0 \md 2
\end{equation}

These describe collections of phase gadgets that are equal to a Clifford, instead of exactly equal to the identity. This is because instead of $\sum_l y^l_i = 0 \md 8$, so that each qubit $i$ has a multiple of 8 $T$ gates that cancel out, we just have $\sum_l y^l_i = 0 \md 2$, so that we can pair up all the $T$ gates on the qubit $i$, to combine them as $T^2=S$, which is Clifford. Similarly, the equation $\sum_l y_i^ly_j^l = 0 \md 2$ means that we can pair up all the CS gates on qubits $i$ and $j$ to produce $\text{CS}^2=\text{CZ}$, which is also Clifford. We get the same for the CCZ gates, which pair up into an identity: $\text{CCZ}^2 = I$. 

We summarise the above discussion in the following theorem.
\begin{theorem}\label{thm:gadget-triortho}
  Let $M$ be a boolean matrix with $n$ columns and $k$ rows and define the unitary $U_M$ as the circuit consisting of $k$ $\frac\pi4$ phase gadgets, where the connectivity of the $j$th gadget is described by the $j$th row of $M$. Then $U_M$ is Clifford if and only if $M$ is semi-triorthogonal and $U_M$ is the identity if and only if $M$ is triorthogonal.
\end{theorem}

\begin{remark}
  Several inequivalent definitions of the term ``triorthogonal'' appear in the literature. It is commonly used to describe the weaker condition~\eqref{eq:semi-triortho}, or an even weaker condition that only requires products of pairs and triples of distinct rows to have even Hamming weight. The stronger condition we call \textit{triorthogonal} is also sometimes called 3-even~\cite{vuillot2022quantum}. Our terminology matches e.g.~\cite{Litinski2019gameofsurfacecodes}, with the slight difference that we impose conditions on the columns of a matrix rather than the rows, as it will make some calculations simpler.
\end{remark}

There is clearly a close relationship between the graphical concept of spider nest identities and the (non-graphical) concepts of triorthogonal matrices and low-degree polynomials. To make this formal, it will be useful to have a bridge between the graphical notation and matrices, which thankfully is provided by the scalable ZX-calculus.

\subsection{The scalable ZX-calculus}\label{sec:scalable-zx}

While plain ZX-diagrams are convenient for doing many concrete calculations, it will be convenient when discussing quantum error correcting codes and transversal gates to adopt the \textit{scalable ZX notation}~\cite{SZXCalculus}. This notation enables us to compactly represent operations on registers of many qubits, while still maintaining much of the flavour of calculations with standard ZX-diagrams.

We represent a register of qubits as a single thick wire and the product of $n$ (unconnected) copies of a Z or X spider as a bold spider:
\[
\tikzfig{thick-wire}
\qquad\qquad
\tikzfig{scalable-z}
\qquad\qquad
\tikzfig{scalable-x}
\]
In \cite{SZXCalculus}, the authors allowed bold spiders to be labelled by lists of phases $\vec\alpha \in \mathbb R^n$, enabling each copy to have a different phase. For our purposes, we won't need this extra generality, so a bold spider labelled by $\alpha \in \mathbb R$ corresponds to $n$ spiders all with phase $\alpha$. The authors of \cite{SZXCalculus} also introduced explicit maps called \textit{dividers} and \textit{gathers} for splitting a register of $m+n$ qubits into two registers of $m$ and $n$ qubits and vice-versa. For our purposes, we will leave these maps implicit. The most important new generator is the ``arrow'', which allows us to represent arbitrary connectivity from $m$ Z-spiders to $n$ X-spiders using an $n \times m$ biadjacency matrix $A \in \mathbb F_2^{n\times m}$:
\begin{equation}\label{eq:arrow-def}
  \tikzfig{arrow-def}
\end{equation}
Taking the convention that $A_i^j$ represents the entry in the $i$-th column and $j$-th row of the matrix $A$, we have in Eq.~\eqref{eq:arrow-def} that $A_i^j = 1$ if and only if the $i$-th Z-spider on the left is connected to the $j$-th X-spider on the right. Concretely, Eq.~\eqref{eq:arrow-def} corresponds, up to scalar factors, to a linear map that acts as $A$ on computational basis vectors:
\[
\tikzfig{arrow} \ \ ::\ \ |\vec b\> \mapsto |A \vec b\>
\]
Note that we treat the bitstring $\vec b$ as a column vector for the purposes of matrix multiplication.

Spiders and arrows satisfy several rules that will prove useful. First, we have two ``copy'' laws relating arrows to Z/X spiders:
\begin{equation}\label{eq:arrow-copy}
  \tikzfig{arrow-copy}
  \qquad\qquad
  \tikzfig{arrow-copy-x}
\end{equation}

Second, we can express block diagonal matrices in terms of spiders:
\begin{equation}\label{eq:arrow-block-matrix}
\tikzfig{arrow-stack}
\qquad
\tikzfig{arrow-stack-x}
\qquad
\tikzfig{arrow-block}
\end{equation}

\section{Inductive construction of spider nest identities}

We can now use the scalable notation to relate spider nest identities to certain families of triorthogonal matrices. The first thing to note is that for any boolean matrix $M$, we can write the associated $n$-qubit diagonal unitary $U_M$ from Theorem~\ref{thm:gadget-triortho}, composed of a $\pi/4$ phase gadget for each of the $k$ rows of $M$, as follows:
\begin{equation}\label{eq:phase-gadget-scalable}
  D_M = \tikzfig{phase-gadget-scalable}
\end{equation}
$M$ has $n$ columns corresponding to the $n$ qubits of $D_M$ and $k$ rows, corresponding to $k$ phase gadgets. The $i$-th row of $M$ then says which qubits are connected to the $i$-th phase gadget. Hence, a matrix $M$ is triorthogonal if and only if $D_M = I$. We write here $D_M$ instead of $U_M$ to refer to the specific diagram in Eq.~\eqref{eq:phase-gadget-scalable}.

Notably, this gives us an infinite family of graphical equations, of the form $D_M = I$ for all triorthogonal matrices $M$. In fact, this is precisely the set of all spider nest identities, which we justified by the concrete calculations involving the inverse Fourier transform in Section~\ref{sec:triortho-bg}. We know by completeness of the Clifford+T ZX-calculus~\cite{jeandel2019completeness} that all of these equations are provable by an extended version of the ZX-calculus. However, from those rules, it is very difficult to see how one could directly reduce a diagram $D_M$ to $I$ for some fixed $M$, and whether that reduction could be done efficiently (i.e. without expanding to a large normal form before reducing back down). Hence, it is interesting to ask just how much we need to add to the simple Clifford rules in Figure~\ref{fig:zx-rules} in order to prove the entire family of spider nest identities directly. Toward that goal, we will now inductively construct a family of maps that will enable us to generate all the $\pi/4$ spider nest identities.

\begin{definition}\label{def:sn}
  The \emph{spider-nest maps} $s_n : 1 \to n$ are constructed inductively as follows:
  \begin{equation}\label{eq:sn-def}
    \tikzfig{sn-def-scalable}
  \end{equation}
\end{definition}

Intuitively, this inductive definition results in a phase gadget connecting the single input wire to every subset of the output wires. For example:
\[\tikzfig{sn-map-s1-sm}\]
\[\tikzfig{sn-map-s2-sm}\]
where the last step follows from applying the strong complementarity rule \zxsc to the marked spider pair, and then applying spider fusion \zxspider as much as possible.

We now formalise this intuitive explanation of $s_n$ using scalable notation. Let $B_n$ be the $n\times 2^n$ matrix whose $2^n$ rows consist of all $n$-bitstrings. That is, the matrix defined inductively as follows:
\[
  B_0 = ()
  \qquad\qquad
  B_n = \begin{pmatrix}
    B_{n-1} & \vec 0 \\
    B_{n-1} & \vec 1
  \end{pmatrix}
\]
where $\vec 0$ and $\vec 1$ are respectively the column vectors of all $0$'s and all $1$'s. For example, we have:
\begin{equation}
  B_1 \ = \ \begin{pmatrix}
    0 \\ 1
  \end{pmatrix} 
  \qquad \quad 
  B_2 \ = \ \begin{pmatrix}
    0 & 0 \\ 1 & 0 \\
    0 & 1 \\ 1 & 1
  \end{pmatrix}
\end{equation}

\begin{theorem}\label{thm:sn-char}
  For all $n$, we have:
  \begin{equation}\label{eq:sn-char}
    \tikzfig{sn-char}
  \end{equation}
\end{theorem}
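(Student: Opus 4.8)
The plan is to prove \eqref{eq:sn-char} by induction on $n$, matching the inductive definition \eqref{eq:sn-def} of $s_n$ against the block-recursive structure of the bitstring matrix $B_n$. Throughout I work up to the nonzero scalar factors the paper is already ignoring, so I will not track Clifford/scalar corrections carefully.

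For the base case $n=0$, I would unfold the base of the recursion in \eqref{eq:sn-def} and compare it directly with the right-hand side of \eqref{eq:sn-char} specialised to $B_0 = ()$. Since $B_0$ has a single (empty) row and no columns, that right-hand side is just one $\pi/4$ phase gadget attached to the lone input wire, which is exactly what $s_0$ is by definition. (The intermediate example $\tikzfig{sn-map-s2-sm}$ in the text already exhibits the kind of \zxspider/\zxsc manipulation needed to see this cleanly.)

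For the inductive step, assume \eqref{eq:sn-char} holds for $n-1$. I would start from $s_n$ as given by \eqref{eq:sn-def}, which presents it in terms of one or two copies of $s_{n-1}$ together with the new output wire and a Z-spider copying the input. Applying the induction hypothesis replaces each occurrence of $s_{n-1}$ by its $B_{n-1}$-form. Now the block identity $B_n = \left(\begin{smallmatrix} B_{n-1} & \vec 0 \\ B_{n-1} & \vec 1 \end{smallmatrix}\right)$ is realised graphically: using the stacking and block-matrix rules \eqref{eq:arrow-block-matrix} together with the two arrow–copy laws and spider fusion \zxspider, the two arrows carrying $B_{n-1}$ merge into a single arrow carrying $B_n$, with the $\vec 0$ column encoding that the $n$-th output wire is disconnected from the first block of $2^{n-1}$ gadgets and the $\vec 1$ column encoding that (after a Z-copy) it feeds every gadget of the second block, while the Z-spider copying the input wire becomes the copy of the input across all $2^n$ gadget heads. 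Fusing the $2^{n-1}+2^{n-1}$ phase heads into a single bold $\pi/4$ spider of arity $2^n$ then yields exactly the right-hand side of \eqref{eq:sn-char}, completing the induction.

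The main obstacle I expect is bookkeeping rather than conceptual: tracking, in scalable notation, precisely how the single input wire and the new $n$-th output wire thread through the nest of Z-spiders and arrows as the block decomposition of $B_n$ is assembled, and choosing the order in which to apply the arrow–copy and block-stacking rules so that the result lands on the exact shape of \eqref{eq:sn-char} rather than a deformed-but-equal diagram (which would then need an extra ``only connectivity matters'' argument). A secondary, minor point is checking that the empty-subset gadget and the scalar factors behave as claimed, but since the statement is only up to nonzero scalar this should not be delicate.
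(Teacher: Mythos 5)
Your proof is correct and follows essentially the same route as the paper: induction on $n$, using the scalable arrow-copy and block-matrix rules to realise the recursive block structure $B_n = \left(\begin{smallmatrix} B_{n-1} & \vec 0 \\ B_{n-1} & \vec 1 \end{smallmatrix}\right)$ graphically and merge the two copies of $s_{n-1}$ into the $B_n$-form. The paper's own proof is just this induction, with the block decomposition of $B_n$ stated as a preliminary graphical identity, so your more detailed bookkeeping fills in exactly the steps it leaves implicit.
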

\begin{proof}
  First, note that:
  \begin{equation}\label{eq:bitstring-rep-pf}
    \tikzfig{bitstring-rep-pf}
  \end{equation}
  
  Using this equation and the scalable rules, we can prove \eqref{eq:sn-char} from \eqref{eq:sn-def} by induction on $n$:
  \ctikzfig{sn-char-induc-pf}
  Here in the last step we gathered together the wires connecting to the X-spiders. This works because the matrix arrows with $\vec 1$ are just a single Z-spider fully-connected to identity X-spiders on the right. These fuse with the surrounding Z- and X-spiders to produce the right result.
\end{proof}

If we connect this $s_n$ generator to an X-spider on the left, we obtain the following:
\[
  \tikzfig{sn-char-2}
\]
where $\bm 1$ is the matrix where every entry is 1. We will see in the next section how this map lets us generate all of the spider nest identities.

\section{Proving all spider nest identities}

In this section, we will show that, by adding just one rule to the Clifford ZX-calculus, we can prove all spider nest equations. We call this one extra equation the \textit{S4-rule}:
\begin{equation}\label{eq:s4-rule}
    \tikzfig{s4-rule}
\end{equation}
We will show this is sound, but we will first need some more definitions.
In particular, we will start by passing to an alternative characterisation for (semi-)triorthogonal matrices, in terms of polynomials of low degree. This result seems to be well-known, and appears in various guises in the literature, e.g. when applying Reed-Muller codes to T-count minimisation or magic state distillation. The version we give here is a variation on one given by Nezami and Haah~\cite{nezami2022classification}.

\begin{definition}
  For a boolean matrix $M$ with $n$ columns, its \textit{indicator polynomial} $P_M \in \mathbb F_2[x_1, \ldots, x_n]$ sends a bitstring $(b_1, \ldots, b_n)$ to $1$ if and only if that bitstring appears as a row in $M$ an odd number of times.
\end{definition}

\begin{theorem}\label{thm:indicator}
  A matrix $M$ with $n$ columns is semi-triorthogonal if and only if its indicator polynomial $P_M$ is of degree at most $n - 4$.
\end{theorem}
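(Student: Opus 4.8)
The plan is to connect the column-product conditions on $M$ (mod 2) to coefficients of the indicator polynomial $P_M$ via a Möbius-inversion / inclusion-exclusion argument, and then show these coefficients are exactly the top-degree parts of $P_M$.

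First I would set up notation: for a subset $S \subseteq \{1,\dots,n\}$ write $m_S(\vec x) = \prod_{i\in S} x_i$ for the monomial indexed by $S$, and expand $P_M = \sum_{S} c_S\, m_S$ with $c_S \in \mathbb F_2$. The key identity is that, over $\mathbb F_2$, the coefficient $c_S$ of the monomial $m_S$ in the unique multilinear polynomial representing a boolean function $f$ is $c_S = \bigoplus_{\vec b \preceq \chi_S} f(\vec b)$, where $\chi_S$ is the indicator vector of $S$ and $\preceq$ is the bitwise partial order (this is the standard Möbius inversion for the boolean lattice, which over $\mathbb F_2$ has no signs). Applying this to $f = P_M$, and using that $P_M(\vec b)$ equals the parity of the number of rows of $M$ equal to $\vec b$, I get that $c_S$ is the parity of the number of rows $\vec y$ of $M$ with $\vec y \preceq \chi_S$, i.e. rows supported inside $S$.

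Next I would relate this to the triorthogonality data. The semi-triorthogonality condition \eqref{eq:semi-triortho} says $\sum_l y^l_i y^l_j y^l_k \equiv 0 \pmod 2$ for all (not necessarily distinct) $i,j,k$; multilinearity of $y^l_i \mapsto (y^l_i)^2 = y^l_i$ over $\mathbb F_2$ means this is equivalent to: for every $T \subseteq \{1,\dots,n\}$ with $|T| \le 3$, the number of rows $\vec y$ of $M$ containing $T$ in their support is even. The goal is to show this holds iff every $c_S$ with $|S| > n-4$, i.e. $|S| \in \{n-3, n-2, n-1, n\}$, vanishes. The bridge between "rows supported inside $S$" and "rows containing $T$" is complementation: a row is supported inside $S$ iff it avoids the complement $\bar S$, and by inclusion–exclusion over subsets of $\bar S$ (again signless mod 2) the parity of the number of rows avoiding $\bar S$ equals $\bigoplus_{T \subseteq \bar S}(\text{parity of rows containing } T)$. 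When $|S| \ge n-3$ we have $|\bar S| \le 3$, so every $T \subseteq \bar S$ has $|T|\le 3$; hence if $M$ is semi-triorthogonal all these parities are $0$ and $c_S = 0$, giving $\deg P_M \le n-4$. Conversely, if $\deg P_M \le n-4$, then for each $T$ with $|T|\le 3$ I can recover the parity of rows containing $T$ as an $\mathbb F_2$-linear combination of the vanishing coefficients $c_S$ with $\bar S \subseteq T$... — more carefully, I would invert the inclusion–exclusion relation the other way, expressing "parity of rows containing $T$" as $\bigoplus_{\bar S \supseteq T, \text{equivalently } S \subseteq \bar T} c_S$ (the two Möbius inversions on the boolean lattice are mutually inverse), and note that when $|T|\le 3$ every such $S$ has $|S| \ge |\bar T| \ge n-3 > n-4$, so all those $c_S$ are zero and the parity vanishes — i.e. $M$ is semi-triorthogonal.

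\textbf{The main obstacle} I expect is bookkeeping the two dual Möbius inversions cleanly and keeping the complement arithmetic straight: the "supported inside $S$" and "contains $T$" pictures are related by complementing, and one must be careful that the degree bound $n-4$ corresponds precisely to complements of size $\le 3$, matching the triple condition in \eqref{eq:semi-triortho}. Once the combinatorial dictionary (coefficient $\leftrightarrow$ down-set parity $\leftrightarrow$ up-set parity of the complement) is pinned down, both directions fall out simultaneously, since the inclusion–exclusion transform is its own inverse over $\mathbb F_2$. I would also remark why it suffices to check the condition only for $|T| \le 3$ rather than arbitrary $i,j,k$: the squared-variable collapse $x^2 = x$ over $\mathbb F_2$ reduces $\sum_l y_i^l y_j^l y_k^l$ to a count over the support set $\{i,j,k\}$, whose size is at most $3$.
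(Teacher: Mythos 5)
Your route is genuinely different from the paper's. The paper's proof (Appendix~\ref{app:indicator}) first cancels repeated rows, observes that semi-triorthogonality says exactly that $[P_M]$ is orthogonal to the evaluation vectors of the monomials $x_ix_jx_k$ (indices possibly repeated), and then invokes Reed--Muller duality $\RM(3,n)^\perp=\RM(n-4,n)$ (Theorem~\ref{thm:rm-dual}) to get the degree bound. You instead stay entirely inside the Boolean lattice: you identify the coefficient $c_S$ of $P_M$ with the parity of rows supported inside $S$, identify the triorthogonality data with the parities $a_T$ of rows containing $T$ with $|T|\le 3$, and let the two mutually inverse mod-2 M\"obius transforms give both directions at once. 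This is more elementary and self-contained (it in effect re-proves the one instance of RM duality that is needed), at the cost of some complementation bookkeeping; the paper's version is shorter given the RM lemma and connects to the Reed--Muller picture used elsewhere in the paper. One slip: the inversion formula should read $a_T=\bigoplus_{S\supseteq\bar T}c_S$ (equivalently $\bar S\subseteq T$), not $S\subseteq\bar T$ as you wrote; since your subsequent bound $|S|\ge|\bar T|$ only makes sense for $S\supseteq\bar T$, this is clearly a typo rather than a conceptual error.

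There is, however, a genuine gap in your forward direction --- and it is the same one hiding in the paper's own proof. Definition \eqref{eq:semi-triortho} quantifies over column indices $i,j,k$, so it only controls $a_T$ for \emph{nonempty} $T$ with $|T|\le 3$; your translation ``for every $T$ with $|T|\le 3$'' silently includes $T=\emptyset$, i.e.\ the parity of the total number of rows, which semi-triorthogonality does not constrain. That term genuinely occurs in your sum $c_S=\bigoplus_{T\subseteq\bar S}a_T$ for every $S$ with $|S|\ge n-3$, so without it the implication fails --- and indeed the statement as literally given fails: the paper's own matrix \eqref{eq:sn-1-4-matrix} (all $15$ nonzero $4$-bitstrings) is semi-triorthogonal (even triorthogonal) but has indicator polynomial $1+\prod_i(1+x_i)$ of degree $4>n-4=0$. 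The paper's proof glosses the same point when it asserts that the monomials $x_ix_jx_k$ span $\RM(3,n)$: they miss the constant $1$. The cure is a parity proviso: either assume the number of rows of $M$ is even (equivalently, include the zero row / work modulo it), in which case $a_\emptyset=0$ and your argument closes, with the corrected equivalence being ``$\deg P_M\le n-4$ iff $M$ is semi-triorthogonal \emph{and} has evenly many rows''; or weaken the conclusion to ``$P_M$ or $P_M+\prod_i(1+x_i)$ has degree $\le n-4$''. Your converse direction only uses nonempty $T$ and is fine as written.
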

The proof is straightforward, but relies on some basic facts about Reed-Muller codes. We give these and the proof of Theorem~\ref{thm:indicator} in Appendix~\ref{app:indicator}.

We can now show that S4 is indeed sound. First, note that it is equivalent up to the standard Clifford rules to the following rule:
\begin{equation}\label{eq:s4-as-gadgets}
  \tikzfig{s4-as-gadgets}
\end{equation}
Here the left-hand side consists of $2^4=16$ phase gadgets, which are all connected to the first qubit, and have all possible connections to the bottom four qubits. Its connectivity matrix is hence $M=(\vec 1\ \ B_4)$. A bitstring $x_1x_2x_3x_4x_5$ hence appears as a row in $M$ iff $x_1=1$. Its indicator polynomial is then $P(x_1,x_2,x_3,x_4,x_5) = x_1$. This is a degree 1 polynomial on 5 variables, and hence satisfies the condition of Theorem~\ref{thm:indicator}, so that $M$ is semi-triorthogonal. We can also manually verify that $M$ is in fact triorthogonal, so that Eq.~\eqref{eq:s4-as-gadgets} is indeed correct.

Returning to the S4 rule, we see that, thanks to the inductive definition of $s_n$, not only does $s_4$ separate, but so do all $s_n$ for $n \geq 4$.

\begin{lemma}\label{lem:sn-sep}
  For $n \geq 4$, the Clifford ZX-calculus augmented with the S4 rule implies:
  \begin{equation}\label{eq:sn-sep}
    \tikzfig{sn-sep}
  \end{equation}
\end{lemma}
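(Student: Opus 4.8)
The plan is to prove \eqref{eq:sn-sep} by induction on $n$, using the inductive definition \eqref{eq:sn-def} of $s_n$ as the engine and the S4 rule as the base case. The statement for $n = 4$ is precisely the S4 rule \eqref{eq:s4-rule}, so nothing needs to be done there. For the inductive step, assume $n \geq 5$ and that $s_{n-1}$ already separates in the sense of \eqref{eq:sn-sep}. The key observation is that the inductive clause of \eqref{eq:sn-def} expresses $s_n$ in terms of $s_{n-1}$ together with a phase gadget and some spider/arrow structure on the $n$-th output wire. So I would first unfold one level of \eqref{eq:sn-def}, then substitute the inductive hypothesis for the copy of $s_{n-1}$ that appears. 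This turns the left-hand side into a diagram built from a product of single-wire phase gadgets (the separated form of $s_{n-1}$) plus the extra structure introduced at level $n$.

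The second step is to push that extra structure through the now-separated part. Since $s_{n-1}$ has become a tensor product of local maps on its $n-1$ wires, the new phase gadget and the spiders/arrows on the $n$-th wire can be reorganised using only Clifford ZX rules: spider fusion \zxspider, the copy laws for arrows, and strong complementarity \zxsc. Concretely, the inductive definition attaches the $n$-th wire via Z/X spiders and an arrow, and these interact with the separated $s_{n-1}$ exactly the way they did in the worked example computing $s_2$ in the text (``unfolding $s_0$ and applying spider fusion and strong complementarity as much as possible''). The goal of this bookkeeping is to recognise that what remains on the $n$-th wire, once everything commutes past the local factors, is itself a separated configuration — i.e. a product of phase gadgets on single wires — giving the right-hand side of \eqref{eq:sn-sep} for $n$.

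The main obstacle I anticipate is the second step: carefully tracking how the biadjacency matrices of the arrows transform when $s_{n-1}$ is replaced by its separated form, and making sure the combinatorics of ``which subsets of wires get a gadget'' works out so that the $n$-wire case really does collapse to the claimed product. This is the point where one must use the explicit description of $s_n$ from Theorem~\ref{thm:sn-char} (the $B_n$ matrix of all bitstrings) rather than just the recursive clause, since the separation is a statement about the \emph{pattern} of gadget connectivity, not just a formal rewrite. In particular, I would likely invoke \eqref{eq:sn-char} to identify both sides as concrete scalable diagrams and then reduce the claim to a matrix identity about $B_n$ versus $B_{n-1}$ stacked appropriately, which can be verified with the block-matrix rules \eqref{eq:arrow-block-matrix}. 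Everything else is routine application of the rules in Figure~\ref{fig:zx-rules} plus the scalable copy laws, so no genuinely new idea beyond the S4 rule should be needed — which is the whole point of the lemma.
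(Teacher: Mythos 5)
Your proposal matches the paper's proof essentially exactly: the paper argues by induction on $n$, taking the S4 rule \eqref{eq:s4-rule} as the base case $n=4$ and, for $n>4$, unfolding $s_n$ via the inductive definition \eqref{eq:sn-def} and applying the induction hypothesis, with the remaining bookkeeping handled by the Clifford rules just as you describe. The detour you anticipate through Theorem~\ref{thm:sn-char} and explicit $B_n$ matrix identities is not needed in the paper's (very terse) argument, but including it does no harm.
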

\begin{proof}
  By induction on $n$. The base case $n=4$ is \eqref{eq:s4-rule}. For $n > 4$, unfold $s_n$ using \eqref{eq:sn-def} and apply the induction hypothesis.
\end{proof}

We see then that if we include $s_n$ for $n\geq 4$ in a circuit, that we obtain:
\begin{equation}\label{eq:sn-x-spiders}
    \tikzfig{sn-x-spiders}
\end{equation}

The lefthand-side above consists of the set of all phase gadgets that are connected to the first $k$ wires and all combinations of the last $n$ wires. Writing these phase gadgets as the rows of a matrix, this is $M = (\bm 1\ \ B_n)$. The indicator polynomial of this matrix $P$ satisfies the condition that $P(x_1, \ldots, x_{n+k}) = 1$ if and only if $x_j = 1$ for $j \in \{1, \ldots, k\}$. Hence, it is the monomial $P = x_1\ldots x_k$. By permuting wires, we can obtain any monomial on $m = k+n$ variables, and as long as the degree $k \leq m - 4$, then equation \eqref{eq:sn-x-spiders} is satisfied. Hence, in particular, if $M$ is a boolean matrix with $m$ columns and no duplicate rows whose indicator polynomial is of degree at most $m-4$, then the Clifford rules together with the S4 rule proves:
\begin{equation}\label{eq:triortho-map2}
  \tikzfig{triortho-map}
\end{equation}

\begin{theorem}\label{thm:diag-complete}
  The Clifford ZX-calculus, plus the S4 rule \eqref{eq:s4-rule} are \textit{diagonally complete} for Clifford+T ZX-diagrams. That is, for any boolean matrices $M, N$, if $\llbracket D_M \rrbracket = \llbracket D_N \rrbracket$, the $D_M$ can be transformed into $D_N$ using only the rules in Figure~\ref{fig:zx-rules} and S4.
\end{theorem}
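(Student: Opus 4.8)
The plan is to reduce "diagonal completeness" to a normal-form argument on the level of indicator polynomials. First, I would observe that a diagonal Clifford+T map $D_M$ is determined up to a global phase by two pieces of data: the $\mathbb{F}_2$-polynomial data carried by the $\pi/4$ gadgets modulo the relations $T^8 = I$, $CS^4 = I$, $CCZ^2 = I$ (from the inverse Fourier transform in Section~\ref{sec:triortho-bg}), together with the Clifford "remainder" living in $\mathcal{C}$. Concretely, two matrices $M, N$ satisfy $\llbracket D_M \rrbracket = \llbracket D_N \rrbracket$ iff the "difference" diagram $D_M \circ D_N^{-1}$ — which is again of the form $D_K$ for a suitable stacked matrix $K$ (inverting a $\pi/4$ gadget just negates its phase, i.e. uses seven copies, so $K$ can be taken to have entries in $\mathbb{F}_2$ with $\pi/4$ gadgets) — equals the identity, i.e. $K$ is triorthogonal. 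So it suffices to show: the Clifford ZX calculus plus S4 proves $D_K = I$ for every triorthogonal $K$. (Strictly, one should also handle the semi-triorthogonal case to absorb the Clifford part, but a triorthogonal-plus-Clifford normal form argument handles both uniformly, since a Clifford diagonal unitary can be written with $\pi/2$-gadgets and is equated to identity via Figure~\ref{fig:zx-rules} alone.)

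The second step is the bridge to polynomials. By Theorem~\ref{thm:indicator}, $K$ triorthogonal means its indicator polynomial $P_K \in \mathbb{F}_2[x_1, \ldots, x_n]$ has degree at most $n-4$. Every such polynomial is an $\mathbb{F}_2$-linear combination of monomials $x_{i_1}\cdots x_{i_k}$ with $k \le n - 4$. The key reduction step, already carried out in the discussion around equation~\eqref{eq:sn-x-spiders}, is that a single monomial $x_1 \cdots x_k$ with $k \le (k+n') - 4 = m-4$ corresponds (after permuting wires, which is free by "only connectivity matters") to the matrix $M = (\bm 1 \ \ B_{n'})$, and Lemma~\ref{lem:sn-sep} — the generalised separation of $s_n$ for all $n \ge 4$, which follows from S4 by induction — lets us split off $s_n$ and collapse these phase gadgets to the identity. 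So S4 gives us, for free, that $D_M = I$ for every matrix $M$ whose indicator polynomial is a single monomial of the allowed degree.

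The third and crucial step is to go from "single monomial" to "arbitrary triorthogonal $K$": we must show that if $P_K = P_{K_1} + \cdots + P_{K_r}$ is the monomial decomposition of the indicator polynomial, then $D_K$ can be transformed into $D_{K_1} \circ \cdots \circ D_{K_r}$ using only the Clifford rules. This is where gadget fusion \eqref{eq:gadget-fusion} does the work: two phase gadgets on the same wire-set fuse by adding angles, and a $\pi/4$ gadget appearing $8$ times fuses to a trivial (phase-$2\pi$) gadget which is removed by \zxspider and \zxsc. The content here is that $\llbracket D_K \rrbracket = \llbracket D_{K_1}\circ\cdots\circ D_{K_r}\rrbracket$ as linear maps — which holds because both sides have the same phase polynomial, hence the same $T$/$CS$/$CCZ$ content mod $8/4/2$ — does \emph{not} automatically mean they are ZX-provably equal using only Clifford rules plus fusion. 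I expect this to be the main obstacle: one needs that the two collections of $\pi/4$ gadgets differ only by adding/removing gadgets whose multiplicities are multiples of $8$ (on single wires), of $4$ (on pairs), and of $2$ (on triples), and that each such "bundle" is independently Clifford-provable to be trivial — the weight-$8$ single-wire bundle by pure fusion, the weight-$4$ pair and weight-$2$ triple bundles because $S$, $CZ$, $CCZ$ are themselves Clifford-ZX-expressible and their relevant powers reduce via Figure~\ref{fig:zx-rules}. Making this bookkeeping precise — essentially exhibiting a Clifford-provable normal form for diagonal Clifford+T maps indexed by (indicator polynomial of degree $> n-4$ part) $\oplus$ (Clifford diagonal part) — is the technical heart, and it is exactly the step where one cashes in both the inverse-Fourier analysis and Lemma~\ref{lem:sn-sep}. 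Once that normal form is in hand, $\llbracket D_M\rrbracket = \llbracket D_N\rrbracket$ forces equal normal forms, and chaining the rewrites $D_M \to \text{n.f.} \to D_N$ completes the proof.
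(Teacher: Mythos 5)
Your first two steps match the paper: reducing $\llbracket D_M\rrbracket = \llbracket D_N\rrbracket$ to proving $D_K = I$ for a triorthogonal $K$ (the paper does this by cancelling rows mod 8 with gadget fusion rather than composing with $D_N^{-1}$, but this is cosmetic), and using Theorem~\ref{thm:indicator} together with Lemma~\ref{lem:sn-sep} and \eqref{eq:sn-x-spiders} to get, for each monomial of degree $\leq n-4$, an S4-provable identity $D_{N_j} = I$. But your third step — the one you yourself flag as ``the technical heart'' — is left unproven, and the route you sketch for it does not work as stated. You want to transform $D_K$ into the composite $D_{K_1}\circ\cdots\circ D_{K_r}$ of monomial pieces by Clifford rules plus fusion, and you propose to justify this by splitting the difference into ``bundles'' of multiplicity $8$ on single wires, $4$ on pairs, $2$ on triples, each independently Clifford-provable. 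This conflates the multiplicities of the $\pi/4$ \emph{gadgets} with their $T$/$CS$/$CCZ$ \emph{Fourier content}: the gadgets of a spider nest need not come in such bundles at all (in \eqref{eq:sn-1-4} every gadget appears exactly once), and rewriting a single $\pi/4$ gadget into its $T$/$CS$/$CCZ$ expansion is itself a non-Clifford fact — essentially a spider nest identity — so the proposed bookkeeping begs the question, or else requires constructing a genuinely new normal form for diagonal Clifford+T maps that you do not supply.

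The paper avoids this obstacle entirely by running the argument in the opposite direction and only at the mod-2 level. Instead of decomposing $D_K$, it \emph{pads} it: using Lemma~\ref{lem:sn-sep} one introduces the identities $D_{N_1},\ldots,D_{N_k}$ (one per monomial of $P_K$) next to $D_K$, so that the combined matrix $L$ has indicator polynomial $P_K + \sum_j P_j = 0$, i.e.\ every distinct row of $L$ occurs an even number of times. Gadget fusion \eqref{eq:gadget-fusion} then pairs identical $\pi/4$ gadgets into $\pi/2$ gadgets, leaving a Clifford diagram, which denotes the identity (by soundness) and is therefore rewritable to $I$ by Clifford completeness. All the mod-8 and mod-4 bookkeeping that worried you is thereby absorbed into the appeal to Clifford completeness; only the indicator-polynomial (mod-2) data needs explicit graphical handling. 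As written, your proposal has a genuine gap precisely at the step where the paper's argument does its real work.
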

\begin{proof}
  Proving $D_M = D_N$ is equivalent to proving $D_MD_N^\dagger = I$. First note that $D_N^\dagger$ is equal to $D_N$ up to a Clifford diagram, as $\frac\pi4$ phase gadgets and $-\frac\pi4$ phase gadgets differ by a Clifford. The circuit for $D_MD_N^\dagger$ consists of a combination of phase gadgets, which can be combined into a single matrix describing the connectivity of the gadgets. Duplicate rows can always be cancelled by applying the gadget fusion rule \eqref{eq:gadget-fusion}. Hence, to prove completeness it suffices to prove that when $\llbracket D_M \rrbracket = \llbracket I \rrbracket$ we can rewrite $D_M$ into $I$.
  Concretely, $D_M = I$ is true if and only if $M$ is triorthogonal. Let $P_{M}$ be its indicator polynomial, which will have degree at most $n-4$ for $n$ the number of qubits (Theorem~\ref{thm:indicator}). Let $N_1, \ldots, N_k$ be matrices whose indicator monomials are $P_1, \ldots, P_k$ for $P_{M} = \sum_j P_j$. Since these all have degree $\leq n-4$, we can show using Eq.~\eqref{eq:triortho-map2} that:
  \[
 \tikzfig{completeness-pf}   
  \]
  Then, the indicator polynomial of $L$ is $P_M + \sum_j P_j = 0$. Hence, every row in $L$ appears an even number of times. Using gadget-fusion, we can therefore reduce all angles to integer multiples of $\pi/2$. We can then apply Clifford-completeness to reduce to the identity.
\end{proof}

\begin{remark}
Note that while S4 makes the ZX-calculus diagonally complete for Clifford+T ZX-diagrams, this rule set is not complete for all Clifford+T diagrams. To see this, one can check that S4 is sound for the $\llbracket-\rrbracket^\sharp$ interpretation given in~\cite{supplementarity} and hence cannot derive e.g.~the supplementarity law for non-Clifford angles.\footnote{The authors wish to thank Richie Yeung for pointing this out.}
\end{remark}

One way to think of the Clifford+S4 ZX-calculus is as the ZX analogue of the equational presentation for CNOT-Dihedral circuits of Amy \textit{et al}~\cite{amy2017dihedral}. Indeed we can get the following as a corollary of Theorem~\ref{thm:diag-complete}.

\begin{corollary}\label{cor:CNOT+T}
  The Clifford rules plus S4 are complete for CNOT+T circuits.
\end{corollary}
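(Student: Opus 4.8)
The plan is to reduce the statement to Theorem~\ref{thm:diag-complete} by peeling off the linear-reversible (CNOT) part of a CNOT+T circuit and handling it with completeness of the plain Clifford ZX calculus. Recall that every CNOT+T circuit $C$ implements a unitary of the form $|\vec x\>\mapsto e^{i\phi_C(\vec x)}|A_C\vec x\>$, where $A_C$ is an invertible boolean matrix and $\phi_C$ is a phase polynomial all of whose coefficients are integer multiples of $\pi/4$. Graphically, a single $T$ gate is a $\pi/4$ phase gadget on one wire, a CNOT is a Z-spider legged onto an X-spider, and conjugating a phase gadget past a CNOT — done purely with spider fusion \zxspider{} and strong complementarity \zxsc{} — again yields a $\pi/4$ phase gadget, now on a parity of wires. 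Iterating, one pushes every phase gadget to the left using only the rules of Figure~\ref{fig:zx-rules}, reaching a normal form $C = G_C\circ D_{M_C}$, where $D_{M_C}$ is a product of $\pi/4$ phase gadgets (one per row of a boolean matrix $M_C$) and $G_C$ is a phase-free ZX diagram of Z/X spiders realising the linear reversible map $\vec b\mapsto A_C\vec b$.

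Next, the semantics pins down the two pieces. If $\llbracket C_1\rrbracket = \llbracket C_2\rrbracket$, then comparing images of computational basis states forces the linear parts to agree, $A_{C_1}=A_{C_2}$, and, since these matrices are invertible, the phase parts must then also agree, i.e.\ $\llbracket D_{M_{C_1}}\rrbracket = \llbracket D_{M_{C_2}}\rrbracket$. Because $G_{C_1}$ and $G_{C_2}$ are Clifford ZX diagrams computing the same invertible linear map, completeness of the Clifford ZX calculus~\cite{BackensCompleteness} transforms $G_{C_1}$ into $G_{C_2}$; applying this rewrite inside $C_1 = G_{C_1}\circ D_{M_{C_1}}$ reduces the goal to transforming $D_{M_{C_1}}$ into $D_{M_{C_2}}$.

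That final step is exactly Theorem~\ref{thm:diag-complete}: from $\llbracket D_{M_{C_1}}\rrbracket = \llbracket D_{M_{C_2}}\rrbracket$ the Clifford ZX calculus plus the S4 rule produces the required rewrite (concretely, via gadget fusion~\eqref{eq:gadget-fusion} and the indicator-polynomial argument in the proof of that theorem). Composing the three reductions yields a derivation of $C_1 = C_2$ using only the rules of Figure~\ref{fig:zx-rules} and S4, so the calculus is complete for CNOT+T circuits, up to global scalar as throughout the paper.

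The part needing the most care is the first paragraph: checking that the phase-polynomial normal form $C = G_C\circ D_{M_C}$ can genuinely be reached \emph{inside} the Clifford ZX calculus — that each step of commuting phase gadgets through the CNOT skeleton, and of rearranging the leftover linear-reversible diagram, is an instance of the rules in Figure~\ref{fig:zx-rules} — and that the linear part $A_C$ is extracted so that $A_{C_1}=A_{C_2}$ really does follow from $\llbracket C_1\rrbracket = \llbracket C_2\rrbracket$. This is essentially the graphical counterpart of the Amy--Maslov--Mosca phase-polynomial normal form and of the CNOT-dihedral presentation of Amy \textit{et al.}~\cite{amy2017dihedral}, so it is routine but should be spelled out; all the genuinely new content sits in Theorem~\ref{thm:diag-complete}.
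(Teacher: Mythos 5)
Your proposal is correct and follows essentially the same route as the paper: rewrite the CNOT+T circuit(s) into a phase-gadget layer $D_M$ composed with a phase-free CNOT part using only Clifford rules, separate the two pieces semantically, and finish with phase-free/Clifford completeness for the linear part and Theorem~\ref{thm:diag-complete} for the diagonal part. The only cosmetic difference is that the paper reduces to rewriting $U^\dagger V$ to the identity rather than rewriting $C_1$ directly into $C_2$, which changes nothing substantive.
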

\begin{proof}
  For CNOT+T circuits $U, V$ where $\llbracket U \rrbracket = \llbracket V \rrbracket$, it suffices to show that $U^\dagger V$ can be rewritten to the identity. Using just Clifford rules, it is possible to rewrite any CNOT+T circuit into a layer of phase gadgets $D_M$ followed by a CNOT circuit $C$. Since $\llbracket U^\dagger V \rrbracket = I$, it must be the case the $\llbracket D_M \rrbracket = \llbracket C \rrbracket = I$. Hence, we can use the completeness of phase-free ZX to rewrite $C$ into $I$ and Theorem~\ref{thm:diag-complete} to rewrite $D_M$ into $I$.
\end{proof}

A Clifford+T circuit can be written as a composition of CNOT, $T$ and Hadamard gates. Hence, the results above apply to those Clifford+T circuits without Hadamards. For any circuit with Hadamards, we can ``split it up'' at those points and reason about the CNOT+$T$ circuits separately. This technique is used for instance when doing $T$-count optimisation~\cite{debeaudrap2020spidernest2,heyfron2018efficient,ruiz2024quantum}, where the circuit is either split up on the Hadamards, or the Hadamards are made into CZ gates by the use of an ancilla and measurement~\cite{heyfron2018efficient}.
Note that there do seem to be Clifford+T circuit rewrite rules that specifically involve Hadamard gates~\cite{coecke2018zx} that seem likely to not be provable just using Clifford rules and spider-nest identities.

\begin{remark}
  Theorem~\ref{thm:diag-complete} and Corollary~\ref{cor:CNOT+T} should be compared to the results in~\cite{deBeaudrap2020Techniques,Munson2019ANDgates}. There they show that assuming the 15 $T$ gate spider-nest Eq.~\eqref{eq:sn-1-4} they can prove the family of spider nests consisting of a single high-degree phase gadget combined with many phase gadgets of degree 3 and lower. 
  However, they only use these rules for circuit optimisation and do not discuss completeness.
\end{remark}

\section{Characterisation of transversal $\mathcal D_3$ gates for CSS codes}

We can use our representation of spider-nest identities in the scalable ZX-calculus to prove a characterisation of CSS codes with transversal diagonal logical operations in the third level of the Clifford hierarchy.

A \emph{CSS code} is a stabiliser code that has a generating set of stabilisers consisting of pure $X$ stabilisers and pure $Z$ stabilisers (i.e.~that are respectively tensor products of $I$ and $X$, or $I$ and $Z$). Its logical $X$ (resp. $Z$) operators also purely consist of $X$ (resp. $Z$) operators. The $Z$ stabilisers and logical operators are in fact completely determined by the $X$ stabilisers and logical operators (up to choice of generator), and hence we only have to specify the ``$X$ part'' of the CSS code.

We can hence fully specify a $\llbracket n, k, d \rrbracket$ CSS code---i.e.~one that has $n$ physical qubits, $k$ logical qubits, and is distance $d$--- by fixing $k+r \leq n$ linearly independent vectors in $\mathbb F_2^n$ corresponding to $k$ logical X operators and $r$ X stabilisers. Here a $1$ at position $i$ in the vector denotes an $X$ on the $i$th qubit~\cite{kissinger2022grok}. Letting $L$ and $S$ be the matrices that have these vectors as their columns ($L$ standing for ``logical'' and $S$ for ``stabiliser''), the vectors corresponding to the remaining $n-k-r$ $Z$ checks can be recovered by fixing a basis for $\textrm{span}(\textrm{cols}(L) \cup \textrm{cols}(S))^\perp$.
Following~\cite{kissinger2022grok}, we can write the encoder for a CSS code as a phase-free ZX-diagram. If we choose to write it in Z-X normal form, the encoder consists of a row of Z-spiders at the input and a row of X-spiders at the output. For each logical operator, an input wire connects to a Z-spider, which then connects to X-spiders on the output corresponding to the support of the operator. For each $X$ stabiliser, an additional Z-spider with no inputs connects to output X-spiders, according to the support of the operator. While this is a bit unwieldy to say in words, the encoder can be written in terms of the matrices $L$ and $S$ straightforwardly as follows:
\ctikzfig{encoder-scalable}
Indeed this corresponds to the linear map that sends basis vectors $|\vec b\> \in (\mathbb C^2)^{\otimes k}$ of the logical space to their associated codewords $\sum_{\vec c}|L\vec b + S\vec c\> \in (\mathbb C^2)^{\otimes n}$.

Note that diagonal unitaries of the form $D_P$
correspond to transversal applications of powers of $T$ gates if and only if the rows of $P$ each have Hamming weight $1$ (as $T$ gates are phase gadgets connected to just a single qubit). With this, we are now ready to state our characterisation result.

\begin{theorem}\label{thm:char-transversal}
  A CSS code with X-logical operators and X-stabilisers $L$ and $S$ admits a transversal implementation of a gate $D_H^\dagger \in \mathcal D_3$ if and only if there exists a matrix $P$ whose rows have Hamming weight $1$ such that the matrix
  \[
    M = \left(\begin{array}{c|c}
      H & 0 \\ \hline
      PL & PS
    \end{array}\right)
    \]
    is triorthogonal.
\end{theorem}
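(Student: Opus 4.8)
The plan is to reason with the scalable ZX notation, pushing the transversal physical gate through the CSS encoder and reading off a triorthogonality condition. Recall from the previous section that a transversal $\mathcal D_3$ gate on the physical qubits is a product of single-qubit $T$-powers, which as a diagram is $D_P$ for a matrix $P$ whose rows each have Hamming weight $1$; the phase polynomial is arbitrary in $\mathcal D_3$ but we only need the $\pi/4$-gadget representation. Composing this with the encoder diagram $E_{L,S}$ (written in the displayed $Z$-$X$ normal form in terms of $L$ and $S$), we want to know when the result equals the encoder precomposed with a logical gate $D_H^\dagger$, i.e. $D_P \circ E_{L,S} = E_{L,S} \circ D_H^\dagger$, equivalently $E_{L,S}^\dagger \, D_P \, E_{L,S} = D_H^\dagger$ on the logical space. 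Since $D_H^\dagger$ is required to live in $\mathcal D_3$ it is itself a layer of $\pi/4$ phase gadgets, so after moving everything to one side the whole equation becomes a statement of the form $D_M = I$ for a single combined matrix $M$, and by Theorem~\ref{thm:sn-char}/the discussion around $D_M$, $D_M = I$ exactly when $M$ is triorthogonal.

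The computational heart is to identify $M$ explicitly. First I would use the scalable arrow-copy and block-matrix rules (\eqref{eq:arrow-block-matrix}) to push the transversal gate $D_P$ through the output $X$-spiders of the encoder: a $\pi/4$ gadget on physical qubit $i$, connected through the biadjacency structure given by $(L\,|\,S)$, becomes a $\pi/4$ gadget whose connectivity to the $k$ logical inputs and the $n-k-r$ stabiliser-ancilla inputs is the $i$-th row of $(L\,|\,S)$. Doing this for every row of $P$ at once, and fusing, the pulled-through gadgets are exactly the rows of $(PL \,|\, PS)$: the block $PL$ records connectivity to the logical wires, the block $PS$ to the free "copy" wires introduced by the $X$-stabilisers. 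On the other side, the logical gate $D_H^\dagger$ contributes a layer of gadgets connected only to the logical wires and not to the stabiliser wires, i.e. rows of the form $(H \mid 0)$. Stacking these two families of gadgets gives precisely
\[
  M = \left(\begin{array}{c|c} H & 0 \\ \hline PL & PS \end{array}\right),
\]
and the equation $D_P \circ E_{L,S} = E_{L,S}\circ D_H^\dagger$ holds iff this combined gadget-layer, viewed as a diagonal unitary on the $k$ logical plus $n-k-r$ ancilla wires, acts as the identity — but one has to be slightly careful that the ancilla (stabiliser) wires are not free outputs but are contracted against $|0\rangle$-type spiders in the encoder, so "identity" really means "identity after feeding in the $\sum_{\vec c}$". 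Because the encoder's $X$-stabiliser part sums over all $\vec c$, acting as identity for \emph{every} $\vec c$ is the right condition, and that is exactly $D_M = I$ as diagonal unitary on all $k + (n-k-r)$ variables, hence $M$ triorthogonal.

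For the converse direction I would simply run the same chain of rewrites backwards: given a Hamming-weight-one $P$ making $M$ triorthogonal, the spider-nest identity $D_M = I$ (provable in Clifford$+$S4 by Theorem~\ref{thm:diag-complete}, though soundness alone suffices here) lets us insert $D_M$ freely, split it back into the $(H\mid 0)$ part and the $(PL\mid PS)$ part, absorb the latter into the encoder as a transversal $D_P$ on the physical side and the former as $D_H^\dagger$ on the logical side, yielding the claimed transversal implementation. I expect the main obstacle to be the bookkeeping in the "push through the encoder" step: correctly tracking how a single-qubit phase gadget interacts with the $Z$-$X$ normal-form encoder via the arrow/copy rules, and in particular justifying that the stabiliser-ancilla wires play the role of extra variables over which triorthogonality must hold (rather than being projected out), so that the condition is genuine triorthogonality of the full $M$ and not merely of some marginal. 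Everything else — fusing gadgets, reducing $D_{(\cdot)} \circ E = E \circ D_{(\cdot)}$ to $D_M = I$, and invoking the triorthogonality characterisation — is routine given the machinery already developed.
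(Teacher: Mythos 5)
Your proposal is correct and takes essentially the same route as the paper: push $D_P$ through the encoder's output X-spiders so its gadgets acquire connectivity $(PL\,|\,PS)$, stack these with the $(H\,|\,0)$ gadgets of $D_H$, and reduce $D_P E D_H = E$ to $D_M = I$, i.e.\ triorthogonality of $M$. The ``marginal'' worry you flag is handled in the paper by evaluating both sides on $|{+}\cdots{+}\rangle$ and using injectivity of the remaining encoder map (linear independence of the columns of $L$ and $S$), which is the rigorous form of your ``identity for every $\vec c$'' remark (note the stabiliser ancillae are $|{+}\rangle$-type, not $|0\rangle$-type, states).
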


\begin{proof}
  First, we will apply the scalable ZX rules to graphically decompose the block form of $M$:
  \ctikzfig{triortho-pf-sm}
  For $E$ the encoder associated with the CSS code $(L, S)$, the code implements $D_H^\dagger$ if and only if for some $P$ we have $D_P E D_H = E$. We begin by ``pushing'' $D_P$ through the encoder:
  \ctikzfig{transversal-push}
  If $M$ is triorthogonal, the part marked $M$ in the diagram above will vanish and we are left with the encoder $E$. If it is not triorthogonal, then $D_M \neq I$. Since $D_M$ is diagonal, it follows that $D_M |+\ldots+\> \neq |+\ldots+\>$, so $D_P E D_H |+\ldots+\> \neq E |+\ldots+\>$. Hence, $D_P E D_H \neq E$.
\end{proof}

\begin{remark}
  For simplicity, in Theorem~\ref{thm:char-transversal} we took the meaning of ``transversal'' to be ``consisting of single-qubit operations''. However, there is a more general notion of transversality that classifies operations e.g. between two or more copies of a code. For instance, if a CSS code is the tensor product of two identical CSS codes, then implementing pairwise CNOT gates between the corresponding qubits of the two codes will implement a logical transversal CNOT. This kind of operation is still called transversal because each CNOT gate only involves a single qubit in each copy of the code, i.e. in each \textit{code block}. If our code actually consists of $b$ distinct code blocks, we can accomodate this more general notion of transversality by replacing the condition that the rows of $P$ have Hamming weight $1$ by a condition that $P$ can be factored across code blocks as $P = (\,P_1 \ \ P_2 \ \ \cdots \ \ P_b\,)$ where each of the $P_i$ has rows of hamming weight at most $1$.
\end{remark}

Note if we take $P = I$ (corresponding to a single $T$ gate on each physical qubit), then from any triorthogonal matrix of the form:
\[
    M = \left(\begin{array}{c|c}
      H & 0 \\ \hline
      L & S
    \end{array}\right)
\]
we can read off a CSS code $(L,S)$ with a transversal implementation of $D_H^\dagger$. Many notable CSS codes with transversal gates can be seen as instances of this construction. For the examples below we write $RM(k,n)$ for the \emph{Reed-Muller} code space of polynomials on $n$ variables of degree at most $k$.

\begin{example}
The degree-1 monomial $x_1 \in \RM(1, 5)$ gives the following triorthogonal matrix:
\[
\left(
\begin{array}{c|ccccccccccccccc}
  1 & 1 & 1 & 1 & 1 & 1 & 1 & 1 & 1 & 1 & 1 & 1 & 1 & 1 & 1 & 1 \\ \hline
  0 & 1 & 0 & 1 & 0 & 1 & 0 & 1 & 0 & 1 & 0 & 1 & 0 & 1 & 0 & 1 \\
  0 & 0 & 1 & 1 & 0 & 0 & 1 & 1 & 0 & 0 & 1 & 1 & 0 & 0 & 1 & 1 \\
  0 & 0 & 0 & 0 & 1 & 1 & 1 & 1 & 0 & 0 & 0 & 0 & 1 & 1 & 1 & 1 \\
  0 & 0 & 0 & 0 & 0 & 0 & 0 & 0 & 1 & 1 & 1 & 1 & 1 & 1 & 1 & 1
\end{array}    
\right)^T
\]
The matrices $(L, S)$ on the right define a $\llbracket 15, 1, 3 \rrbracket$ quantum Reed-Muller code. Reading $H$ from the top-left, we see that it has a transversal implementation of $T^\dagger$. This property is used as the basis of the original 15-to-1 magic state distillation protocol given by Bravyi and Kitaev~\cite{bravyi2005universal}.
\end{example}

\begin{example}
The constant polynomial $1 \in \RM(0,4)$ gives us a triorthogonal matrix whose columns are all the 4-bitstrings. If we partition the matrix as follows:
\[
\left(
\begin{array}{cccccccc|cccccccc}
  0 & 1 & 0 & 1 & 0 & 1 & 0 & 1 & 0 & 1 & 0 & 1 & 0 & 1 & 0 & 1 \\
  0 & 0 & 1 & 1 & 0 & 0 & 1 & 1 & 0 & 0 & 1 & 1 & 0 & 0 & 1 & 1 \\
  0 & 0 & 0 & 0 & 1 & 1 & 1 & 1 & 0 & 0 & 0 & 0 & 1 & 1 & 1 & 1 \\ \hline
  0 & 0 & 0 & 0 & 0 & 0 & 0 & 0 & 1 & 1 & 1 & 1 & 1 & 1 & 1 & 1
\end{array}    
\right)^T
\]
then $(L, S)$ defines the X-logical operators and X-stabiliers of an $\llbracket 8, 3, 2 \rrbracket$ colour code, dubbed the ``smallest interesting colour code''~\cite{campbell2016smallest}. In the upper left, we see the phase gadgets of a CCZ gate as in equation~\eqref{eq:ccz}, hence this code admits a transversal $\textrm{CCZ}^\dagger = \textrm{CCZ}$.
\end{example}

While Theorem~\ref{thm:char-transversal} gives a complete characterisation for the transversal gates in a CSS code, it is not obvious from the statement whether it can be used to efficiently find such gates. However, following a technique similar to~\cite{webster2023transversal}, this is in fact possible. In fact, there are three related problems: (1) for fixed logical $H$ find transversal gates $P$, (2) for fixed $P$ find $H$, and (3) compute a generating set of all logical gates and their associated transversal implementations. All three of these problems can be posed as a system of linear equations over $\mathbb Z_8$, which as noted in \cite{webster2023transversal}, can be solved in polynomial time using the Howell normal form of a matrix over a ring. We sketch this procedure in Appendix~\ref{app:algos}.

\section{Conclusion}\label{sec:conclusion}

We have shown that spider-nest identities can be captured directly in terms of their associated triorthogonal matrices using the scalable ZX-calculus. Combining this fact with the graphical encoders for CSS codes introduced in \cite{kissinger2022grok} gives us a succinct and easy to digest characterisation for the set of transversal logical gates in $\mathcal D_3$ supported by a CSS code. The results we have shown here can be straightforwardly extended up the diagonal Clifford hierarchy, from $\mathcal D_3$ to $\mathcal D_\ell$. To do this, we can generalise the inductive family $s_n$ in Definition~\ref{def:sn} to $s_n^{(\ell)}$, whose base case $s_0^{(\ell)}$ has a $\pi/2^{\ell-1}$ phase. Then the analogue to the S4 equation is that $s_{\ell+1}^{(\ell)}$ separates. Interestingly, if we assume this equation for any fixed $\ell$, it automatically follows for any $\ell' \leq \ell$. The rest of the proof goes through by noting that Theorem~\ref{thm:indicator} generalises from triorthogonal to $\ell$-orthogonal matrices and degree $n - \ell - 1$ polynomials. Hence, we can prove all spider nest identities up the $\ell$-th level of the Clifford hierarchy just by adding one rule. However, it seems that proving spider nests at \textit{all} levels of the Clifford hierarchy still requires infinitely many rules. Note that this argument also holds for $\mathcal D_2$, so that we also characterise all the transversal diagonal Clifford unitaries. The proofs then don't require any additional spider-nest identity since the calculus is already complete for Cliffords.

Natural next steps are looking at non-diagonal transversal gates or non-CSS stabiliser codes. The story for diagonal gates in general stabiliser codes is relatively straightforward, given any stabiliser encoder can be decomposed into a CSS part and a diagonal part (cf. \cite{webster2023transversal}, Appendix C). There seems to be a nice graphical story there as well, relating to normal forms of Clifford ZX-diagrams, but we leave this, along with further explorations of the graphical structure of non-CSS codes, as future work. It also remains an open question whether Clifford+S4 (or Clifford+S$\ell$) rules can be extended naturally to a complete set of equations for the appropriate class of ZX-diagrams. While other complete axiomatisations exist, constructing the rules this way could provide new insights into working effectively with quantum computations at higher levels of the Clifford hierarchy.

\paragraph{Acknowledgements.} This work is supported by Engineering and Physical Sciences Research Council Grant EP/Y004736/1 ``Compilation and Verification of Quantum Software in the Noisy and Approximate Regime.''

\bibliographystyle{eptcs}
\bibliography{main}

\begin{thebibliography}{10}
\providecommand{\bibitemdeclare}[2]{}
\providecommand{\surnamestart}{}
\providecommand{\surnameend}{}
\providecommand{\urlprefix}{Available at }
\providecommand{\url}[1]{\texttt{#1}}
\providecommand{\href}[2]{\texttt{#2}}
\providecommand{\urlalt}[2]{\href{#1}{#2}}
\providecommand{\doi}[1]{doi:\urlalt{https://doi.org/#1}{#1}}
\providecommand{\eprint}[1]{arXiv:\urlalt{https://arxiv.org/abs/#1}{#1}}
\providecommand{\bibinfo}[2]{#2}

\bibitemdeclare{inproceedings}{amy2017dihedral}
\bibitem{amy2017dihedral}
\bibinfo{author}{Matthew \surnamestart Amy\surnameend},
  \bibinfo{author}{Jianxin \surnamestart Chen\surnameend} \&
  \bibinfo{author}{Neil~J. \surnamestart Ross\surnameend}
  (\bibinfo{year}{2018}): \emph{\bibinfo{title}{A Finite Presentation of
  CNOT-Dihedral Operators}}.
\newblock In \bibinfo{editor}{Bob \surnamestart Coecke\surnameend} \&
  \bibinfo{editor}{Aleks \surnamestart Kissinger\surnameend}, editors:
  {\slshape \bibinfo{booktitle}{{\rm Proceedings 14th International Conference
  on} Quantum Physics and Logic, {\rm Nijmegen, The Netherlands, 3-7 July
  2017}}}, {\slshape \bibinfo{series}{Electronic Proceedings in Theoretical
  Computer Science}} \bibinfo{volume}{266}, \bibinfo{publisher}{Open Publishing
  Association}, pp. \bibinfo{pages}{84--97}, \doi{10.4204/EPTCS.266.5}.

\bibitemdeclare{article}{amy2016t}
\bibitem{amy2016t}
\bibinfo{author}{Matthew \surnamestart Amy\surnameend} \&
  \bibinfo{author}{Michele \surnamestart Mosca\surnameend}
  (\bibinfo{year}{2019}): \emph{\bibinfo{title}{{T-count optimization and
  Reed-Muller codes}}}.
\newblock {\slshape \bibinfo{journal}{Transactions on Information Theory}},
  \doi{10.1109/TIT.2019.2906374}.
\newblock \urlprefix\url{https://ieeexplore.ieee.org/document/8672175}.

\bibitemdeclare{article}{BackensCompleteness}
\bibitem{BackensCompleteness}
\bibinfo{author}{Miriam \surnamestart Backens\surnameend}
  (\bibinfo{year}{2014}): \emph{\bibinfo{title}{The {ZX}-calculus is complete
  for stabilizer quantum mechanics}}.
\newblock {\slshape \bibinfo{journal}{New Journal of Physics}}
  \bibinfo{volume}{16}(\bibinfo{number}{9}), p. \bibinfo{pages}{093021},
  \doi{10.1088/1367-2630/16/9/093021}.

\bibitemdeclare{inproceedings}{debeaudrap2020spidernest2}
\bibitem{debeaudrap2020spidernest2}
\bibinfo{author}{Niel \surnamestart de~Beaudrap\surnameend},
  \bibinfo{author}{Xiaoning \surnamestart Bian\surnameend} \&
  \bibinfo{author}{Quanlong \surnamestart Wang\surnameend}
  (\bibinfo{year}{2020}): \emph{\bibinfo{title}{{Fast and Effective Techniques
  for T-Count Reduction via Spider Nest Identities}}}.
\newblock In \bibinfo{editor}{Steven~T. \surnamestart Flammia\surnameend},
  editor: {\slshape \bibinfo{booktitle}{15th Conference on the Theory of
  Quantum Computation, Communication and Cryptography (TQC 2020)}}, {\slshape
  \bibinfo{series}{Leibniz International Proceedings in Informatics (LIPIcs)}}
  \bibinfo{volume}{158}, \bibinfo{publisher}{Schloss Dagstuhl--Leibniz-Zentrum
  f{\"u}r Informatik}, \bibinfo{address}{Dagstuhl, Germany}, pp.
  \bibinfo{pages}{11:1--11:23}, \doi{10.4230/LIPIcs.TQC.2020.11}.

\bibitemdeclare{inproceedings}{deBeaudrap2020Techniques}
\bibitem{deBeaudrap2020Techniques}
\bibinfo{author}{Niel \surnamestart de~Beaudrap\surnameend},
  \bibinfo{author}{Xiaoning \surnamestart Bian\surnameend} \&
  \bibinfo{author}{Quanlong \surnamestart Wang\surnameend}
  (\bibinfo{year}{2020}): \emph{\bibinfo{title}{{Techniques to Reduce
  $\pi/4$-Parity-Phase Circuits, Motivated by the ZX Calculus}}}.
\newblock In \bibinfo{editor}{Bob \surnamestart Coecke\surnameend} \&
  \bibinfo{editor}{Matthew \surnamestart Leifer\surnameend}, editors: {\slshape
  \bibinfo{booktitle}{Proceedings 16th International Conference on Quantum
  Physics and Logic, Chapman University, Orange, CA, USA., 10-14 June 2019}},
  {\slshape \bibinfo{series}{Electronic Proceedings in Theoretical Computer
  Science}} \bibinfo{volume}{318}, \bibinfo{publisher}{Open Publishing
  Association}, pp. \bibinfo{pages}{131--149}, \doi{10.4204/EPTCS.318.9}.

\bibitemdeclare{mastersthesis}{borghans2019zx}
\bibitem{borghans2019zx}
\bibinfo{author}{Coen \surnamestart Borghans\surnameend}
  (\bibinfo{year}{2019}): \emph{\bibinfo{title}{ZX-calculus and quantum
  stabilizer theory}}.
\newblock Master's thesis, \bibinfo{school}{Radboud University}.

\bibitemdeclare{article}{bravyi2012magic}
\bibitem{bravyi2012magic}
\bibinfo{author}{Sergey \surnamestart Bravyi\surnameend} \&
  \bibinfo{author}{Jeongwan \surnamestart Haah\surnameend}
  (\bibinfo{year}{2012}): \emph{\bibinfo{title}{Magic-state distillation with
  low overhead}}.
\newblock {\slshape \bibinfo{journal}{Physical Review A}}
  \bibinfo{volume}{86}(\bibinfo{number}{5}), p. \bibinfo{pages}{052329},
  \doi{10.1103/PhysRevA.86.052329}.

\bibitemdeclare{article}{bravyi2005universal}
\bibitem{bravyi2005universal}
\bibinfo{author}{Sergey \surnamestart Bravyi\surnameend} \&
  \bibinfo{author}{Alexei \surnamestart Kitaev\surnameend}
  (\bibinfo{year}{2005}): \emph{\bibinfo{title}{Universal quantum computation
  with ideal {Clifford} gates and noisy ancillas}}.
\newblock {\slshape \bibinfo{journal}{Physical Review A}}
  \bibinfo{volume}{71}(\bibinfo{number}{2}), p. \bibinfo{pages}{022316},
  \doi{10.1103/PhysRevA.71.022316}.

\bibitemdeclare{misc}{campbell2016smallest}
\bibitem{campbell2016smallest}
\bibinfo{author}{Earl \surnamestart Campbell\surnameend}:
  \emph{\bibinfo{title}{The smallest interesting colour code}}.
\newblock
  \bibinfo{note}{\url{https://earltcampbell.com/2016/09/26/the-smallest-interesting-colour-code/}}.

\bibitemdeclare{article}{campbell2017unified}
\bibitem{campbell2017unified}
\bibinfo{author}{Earl~T \surnamestart Campbell\surnameend} \&
  \bibinfo{author}{Mark \surnamestart Howard\surnameend}
  (\bibinfo{year}{2017}): \emph{\bibinfo{title}{Unified framework for magic
  state distillation and multiqubit gate synthesis with reduced resource
  cost}}.
\newblock {\slshape \bibinfo{journal}{Physical Review A}}
  \bibinfo{volume}{95}(\bibinfo{number}{2}), p. \bibinfo{pages}{022316},
  \doi{10.1103/PhysRevA.95.022316}.

\bibitemdeclare{inproceedings}{SZXCalculus}
\bibitem{SZXCalculus}
\bibinfo{author}{Titouan \surnamestart Carette\surnameend},
  \bibinfo{author}{Dominic \surnamestart Horsman\surnameend} \&
  \bibinfo{author}{Simon \surnamestart Perdrix\surnameend}
  (\bibinfo{year}{2019}): \emph{\bibinfo{title}{{SZX-Calculus: Scalable
  Graphical Quantum Reasoning}}}.
\newblock In \bibinfo{editor}{Peter \surnamestart Rossmanith\surnameend},
  \bibinfo{editor}{Pinar \surnamestart Heggernes\surnameend} \&
  \bibinfo{editor}{Joost-Pieter \surnamestart Katoen\surnameend}, editors:
  {\slshape \bibinfo{booktitle}{44th International Symposium on Mathematical
  Foundations of Computer Science (MFCS 2019)}}, {\slshape
  \bibinfo{series}{Leibniz International Proceedings in Informatics (LIPIcs)}}
  \bibinfo{volume}{138}, \bibinfo{publisher}{Schloss Dagstuhl--Leibniz-Zentrum
  fuer Informatik}, \bibinfo{address}{Dagstuhl, Germany}, pp.
  \bibinfo{pages}{55:1--55:15}, \doi{10.4230/LIPIcs.MFCS.2019.55}.
\newblock \urlprefix\url{http://drops.dagstuhl.de/opus/volltexte/2019/10999}.

\bibitemdeclare{inproceedings}{CD1}
\bibitem{CD1}
\bibinfo{author}{Bob \surnamestart Coecke\surnameend} \& \bibinfo{author}{Ross
  \surnamestart Duncan\surnameend} (\bibinfo{year}{2008}):
  \emph{\bibinfo{title}{Interacting quantum observables}}.
\newblock In: {\slshape \bibinfo{booktitle}{Proceedings of the 37th
  International Colloquium on Automata, Languages and Programming (ICALP)}},
  \bibinfo{series}{Lecture Notes in Computer Science},
  \doi{10.1007/978-3-540-70583-3_25}.

\bibitemdeclare{article}{CD2}
\bibitem{CD2}
\bibinfo{author}{Bob \surnamestart Coecke\surnameend} \& \bibinfo{author}{Ross
  \surnamestart Duncan\surnameend} (\bibinfo{year}{2011}):
  \emph{\bibinfo{title}{Interacting quantum observables: categorical algebra
  and diagrammatics}}.
\newblock {\slshape \bibinfo{journal}{New Journal of Physics}}
  \bibinfo{volume}{13}, p. \bibinfo{pages}{043016},
  \doi{10.1088/1367-2630/13/4/043016}.

\bibitemdeclare{inproceedings}{coecke2018zx}
\bibitem{coecke2018zx}
\bibinfo{author}{Bob \surnamestart Coecke\surnameend} \&
  \bibinfo{author}{Quanlong \surnamestart Wang\surnameend}
  (\bibinfo{year}{2018}): \emph{\bibinfo{title}{{ZX-rules for 2-qubit Clifford+
  T quantum circuits}}}.
\newblock In: {\slshape \bibinfo{booktitle}{International Conference on
  Reversible Computation}}, \bibinfo{organization}{Springer}, pp.
  \bibinfo{pages}{144--161}, \doi{10.1007/978-3-319-99498-7\_10}.

\bibitemdeclare{article}{cui2017diagonal}
\bibitem{cui2017diagonal}
\bibinfo{author}{Shawn~X. \surnamestart Cui\surnameend},
  \bibinfo{author}{Daniel \surnamestart Gottesman\surnameend} \&
  \bibinfo{author}{Anirudh \surnamestart Krishna\surnameend}
  (\bibinfo{year}{2017}): \emph{\bibinfo{title}{Diagonal gates in the
  {Clifford} hierarchy}}.
\newblock {\slshape \bibinfo{journal}{Physical Review A}}
  \bibinfo{volume}{95}(\bibinfo{number}{1}), p. \bibinfo{pages}{012329},
  \doi{10.1103/PhysRevA.95.012329}.

\bibitemdeclare{article}{heyfron2018efficient}
\bibitem{heyfron2018efficient}
\bibinfo{author}{Luke~E \surnamestart Heyfron\surnameend} \&
  \bibinfo{author}{Earl~T \surnamestart Campbell\surnameend}
  (\bibinfo{year}{2018}): \emph{\bibinfo{title}{{An efficient quantum compiler
  that reduces T count}}}.
\newblock {\slshape \bibinfo{journal}{Quantum Science and Technology}}
  \bibinfo{volume}{4}(\bibinfo{number}{015004}),
  \doi{10.1088/2058-9565/aad604}.

\bibitemdeclare{article}{jeandel2019completeness}
\bibitem{jeandel2019completeness}
\bibinfo{author}{Emmanuel \surnamestart Jeandel\surnameend},
  \bibinfo{author}{Simon \surnamestart Perdrix\surnameend} \&
  \bibinfo{author}{Renaud \surnamestart Vilmart\surnameend}
  (\bibinfo{year}{2020}): \emph{\bibinfo{title}{{Completeness of the
  ZX-Calculus}}}.
\newblock {\slshape \bibinfo{journal}{Logical Methods in Computer Science}},
  \doi{10.23638/LMCS-16(2:11)2020}.

\bibitemdeclare{article}{kissinger2022grok}
\bibitem{kissinger2022grok}
\bibinfo{author}{Aleks \surnamestart Kissinger\surnameend}
  (\bibinfo{year}{2022}): \emph{\bibinfo{title}{Phase-free ZX diagrams are CSS
  codes (... or how to graphically grok the surface code)}}.
\newblock {\slshape \bibinfo{journal}{arXiv preprint arXiv:2204.14038}},
  \doi{10.48550/arXiv.2204.14038}.

\bibitemdeclare{article}{kissinger2019tcount}
\bibitem{kissinger2019tcount}
\bibinfo{author}{Aleks \surnamestart Kissinger\surnameend} \&
  \bibinfo{author}{John \surnamestart van~de Wetering\surnameend}
  (\bibinfo{year}{2020}): \emph{\bibinfo{title}{{Reducing the number of
  non-Clifford gates in quantum circuits}}}.
\newblock {\slshape \bibinfo{journal}{Physical Review A}}
  \bibinfo{volume}{102}, p. \bibinfo{pages}{022406},
  \doi{10.1103/PhysRevA.102.022406}.

\bibitemdeclare{article}{Litinski2019gameofsurfacecodes}
\bibitem{Litinski2019gameofsurfacecodes}
\bibinfo{author}{Daniel \surnamestart Litinski\surnameend}
  (\bibinfo{year}{2019}): \emph{\bibinfo{title}{A {G}ame of {S}urface {C}odes:
  {L}arge-{S}cale {Q}uantum {C}omputing with {L}attice {S}urgery}}.
\newblock {\slshape \bibinfo{journal}{{Quantum}}} \bibinfo{volume}{3}, p.
  \bibinfo{pages}{128}, \doi{10.22331/q-2019-03-05-128}.

\bibitemdeclare{inproceedings}{Munson2019ANDgates}
\bibitem{Munson2019ANDgates}
\bibinfo{author}{Anthony \surnamestart Munson\surnameend}, \bibinfo{author}{Bob
  \surnamestart Coecke\surnameend} \& \bibinfo{author}{Quanlong \surnamestart
  Wang\surnameend} (\bibinfo{year}{2021}): \emph{\bibinfo{title}{{AND-gates in
  ZX-calculus: Spider Nest Identities and QBC-completeness}}}.
\newblock In \bibinfo{editor}{Beno\^it \surnamestart Valiron\surnameend},
  \bibinfo{editor}{Shane \surnamestart Mansfield\surnameend},
  \bibinfo{editor}{Pablo \surnamestart Arrighi\surnameend} \&
  \bibinfo{editor}{Prakash \surnamestart Panangaden\surnameend}, editors:
  {\slshape \bibinfo{booktitle}{Proceedings 17th International Conference on
  Quantum Physics and Logic, Paris, France, June 2 - 6, 2020}}, {\slshape
  \bibinfo{series}{Electronic Proceedings in Theoretical Computer Science}}
  \bibinfo{volume}{340}, \bibinfo{publisher}{Open Publishing Association}, pp.
  \bibinfo{pages}{230--255}, \doi{10.4204/EPTCS.340.12}.

\bibitemdeclare{article}{nezami2022classification}
\bibitem{nezami2022classification}
\bibinfo{author}{Sepehr \surnamestart Nezami\surnameend} \&
  \bibinfo{author}{Jeongwan \surnamestart Haah\surnameend}
  (\bibinfo{year}{2022}): \emph{\bibinfo{title}{Classification of small
  triorthogonal codes}}.
\newblock {\slshape \bibinfo{journal}{Physical Review A}}
  \bibinfo{volume}{106}(\bibinfo{number}{1}), p. \bibinfo{pages}{012437},
  \doi{10.1103/PhysRevA.106.012437}.

\bibitemdeclare{misc}{HarnyCompleteness}
\bibitem{HarnyCompleteness}
\bibinfo{author}{Kang~Feng \surnamestart Ng\surnameend} \&
  \bibinfo{author}{Quanlong \surnamestart Wang\surnameend}
  (\bibinfo{year}{2017}): \emph{\bibinfo{title}{A universal completion of the
  {ZX}-calculus}}.
\newblock \bibinfo{howpublished}{Preprint}, \doi{10.48550/arXiv.1706.09877}.
\newblock \eprint{1706.09877}.

\bibitemdeclare{inproceedings}{supplementarity}
\bibitem{supplementarity}
\bibinfo{author}{Simon \surnamestart Perdrix\surnameend} \&
  \bibinfo{author}{Quanlong \surnamestart Wang\surnameend}
  (\bibinfo{year}{2016}): \emph{\bibinfo{title}{Supplementarity is Necessary
  for Quantum Diagram Reasoning}}.
\newblock In: {\slshape \bibinfo{booktitle}{41st International Symposium on
  Mathematical Foundations of Computer Science (MFCS 2016)}}, {\slshape
  \bibinfo{series}{Leibniz International Proceedings in Informatics
  (LIPIcs)}}~\bibinfo{volume}{58}, \bibinfo{address}{Krakow, Poland}, pp.
  \bibinfo{pages}{76:1--76:14}, \doi{10.4230/LIPIcs.MFCS.2016.76}.

\bibitemdeclare{article}{rengaswamy2020optimality}
\bibitem{rengaswamy2020optimality}
\bibinfo{author}{Narayanan \surnamestart Rengaswamy\surnameend},
  \bibinfo{author}{Robert \surnamestart Calderbank\surnameend},
  \bibinfo{author}{Michael \surnamestart Newman\surnameend} \&
  \bibinfo{author}{Henry~D \surnamestart Pfister\surnameend}
  (\bibinfo{year}{2020}): \emph{\bibinfo{title}{On optimality of CSS codes for
  transversal T}}.
\newblock {\slshape \bibinfo{journal}{IEEE Journal on Selected Areas in
  Information Theory}} \bibinfo{volume}{1}(\bibinfo{number}{2}), pp.
  \bibinfo{pages}{499--514}, \doi{10.1109/JSAIT.2020.3012914}.

\bibitemdeclare{article}{ruiz2024quantum}
\bibitem{ruiz2024quantum}
\bibinfo{author}{Francisco~JR \surnamestart Ruiz\surnameend},
  \bibinfo{author}{Tuomas \surnamestart Laakkonen\surnameend},
  \bibinfo{author}{Johannes \surnamestart Bausch\surnameend},
  \bibinfo{author}{Matej \surnamestart Balog\surnameend},
  \bibinfo{author}{Mohammadamin \surnamestart Barekatain\surnameend},
  \bibinfo{author}{Francisco~JH \surnamestart Heras\surnameend},
  \bibinfo{author}{Alexander \surnamestart Novikov\surnameend},
  \bibinfo{author}{Nathan \surnamestart Fitzpatrick\surnameend},
  \bibinfo{author}{Bernardino \surnamestart Romera-Paredes\surnameend},
  \bibinfo{author}{John \surnamestart van~de Wetering\surnameend} et~al.
  (\bibinfo{year}{2024}): \emph{\bibinfo{title}{Quantum Circuit Optimization
  with AlphaTensor}}.
\newblock {\slshape \bibinfo{journal}{arXiv preprint arXiv:2402.14396}},
  \doi{10.48550/arXiv.2402.14396}.

\bibitemdeclare{article}{vuillot2022quantum}
\bibitem{vuillot2022quantum}
\bibinfo{author}{Christophe \surnamestart Vuillot\surnameend} \&
  \bibinfo{author}{Nikolas~P \surnamestart Breuckmann\surnameend}
  (\bibinfo{year}{2022}): \emph{\bibinfo{title}{Quantum pin codes}}.
\newblock {\slshape \bibinfo{journal}{IEEE Transactions on Information Theory}}
  \bibinfo{volume}{68}(\bibinfo{number}{9}), pp. \bibinfo{pages}{5955--5974},
  \doi{10.1109/TIT.2022.3170846}.

\bibitemdeclare{article}{webster2023transversal}
\bibitem{webster2023transversal}
\bibinfo{author}{Mark~A \surnamestart Webster\surnameend},
  \bibinfo{author}{Armanda~O \surnamestart Quintavalle\surnameend} \&
  \bibinfo{author}{Stephen~D \surnamestart Bartlett\surnameend}
  (\bibinfo{year}{2023}): \emph{\bibinfo{title}{Transversal diagonal logical
  operators for stabiliser codes}}.
\newblock {\slshape \bibinfo{journal}{New Journal of Physics}}
  \bibinfo{volume}{25}(\bibinfo{number}{10}), p. \bibinfo{pages}{103018},
  \doi{10.1088/1367-2630/acfc5f}.

\bibitemdeclare{article}{vdwetering2024optimal}
\bibitem{vdwetering2024optimal}
\bibinfo{author}{John \surnamestart van~de Wetering\surnameend},
  \bibinfo{author}{Richie \surnamestart Yeung\surnameend},
  \bibinfo{author}{Tuomas \surnamestart Laakkonen\surnameend} \&
  \bibinfo{author}{Aleks \surnamestart Kissinger\surnameend}
  (\bibinfo{year}{2024}): \emph{\bibinfo{title}{Optimal compilation of
  parametrised quantum circuits}}.
\newblock {\slshape \bibinfo{journal}{arXiv preprint arXiv:2401.12877}},
  \doi{10.48550/arXiv.2401.12877}.

\end{thebibliography}

\appendix

\section{Indicator polynomials for triorthogonal matrices}\label{app:indicator}

Since it is possible to represent arbitrary functions as polynomials, we can think of a polynomial $P$ in $n$ variables as a vector $[P]$ in $\mathbb F_2^{2^n}$ where $[P]_{\vec b} = P(\vec b)$. By restricting to polynomials of a fixed degree $r$, we obtain certain subspaces called Reed-Muller codes.

\begin{definition}
  The \textit{Reed-Muller code} $\RM(r, m)$ is the linear subspace of $\mathbb F_2^{2^m}$ consisting of vectors of the form $[P]$ for some polynomial $P \in \mathbb F_2[x_1, \ldots, x_m]$ of degree $\leq r$.
\end{definition}

A classic property of Reed-Muller codes is their orthocomplements are also Reed-Muller codes. This fact will help establish a correspondence with triorthogonal matrices.

\begin{theorem}\label{thm:rm-dual}
  For any $r < m$, $\RM(r, m)^\perp = \RM(m - r - 1, m)$.
\end{theorem}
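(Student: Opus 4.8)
The plan is to prove the self-duality $\RM(r,m)^\perp = \RM(m-r-1,m)$ by a dimension count plus a direct orthogonality check, the latter being the crux. First I would recall the standard dimension formula $\dim \RM(r,m) = \sum_{i=0}^{r}\binom{m}{i}$, which follows since a basis is given by the squarefree monomials $\prod_{i\in S}x_i$ with $|S|\le r$ (these are linearly independent as functions on $\mathbb F_2^m$, using $x_i^2 = x_i$). Then $\dim\RM(r,m) + \dim\RM(m-r-1,m) = \sum_{i=0}^{r}\binom{m}{i} + \sum_{i=0}^{m-r-1}\binom{m}{i} = \sum_{i=0}^{m}\binom{m}{i} = 2^m$, using the symmetry $\binom{m}{i}=\binom{m}{m-i}$ to see the two ranges partition $\{0,\dots,m\}$. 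Since the ambient space is $\mathbb F_2^{2^m}$ of dimension $2^m$, it therefore suffices to show the containment $\RM(m-r-1,m)\subseteq \RM(r,m)^\perp$, i.e. that every degree-$\le r$ polynomial is orthogonal (under the standard bilinear form $\langle[P],[Q]\rangle = \sum_{\vec b}P(\vec b)Q(\vec b)$) to every degree-$\le (m-r-1)$ polynomial.

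By bilinearity it is enough to check orthogonality on monomial basis vectors: for $S,T\subseteq[m]$ with $|S|\le r$ and $|T|\le m-r-1$, I need $\sum_{\vec b\in\mathbb F_2^m}\big(\prod_{i\in S}b_i\big)\big(\prod_{j\in T}b_j\big) = 0$ in $\mathbb F_2$. The product of the two monomials is the indicator of the set $\{\vec b : b_i = 1 \text{ for all } i\in S\cup T\}$, which has exactly $2^{m-|S\cup T|}$ elements. Since $|S\cup T|\le |S|+|T|\le r + (m-r-1) = m-1$, the exponent $m-|S\cup T|\ge 1$, so the count $2^{m-|S\cup T|}$ is even, hence zero mod $2$. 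This establishes $\RM(m-r-1,m)\subseteq\RM(r,m)^\perp$, and combined with the dimension count gives equality.

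I expect the main (though still modest) obstacle to be nailing down the two ingredients cleanly: justifying that the squarefree monomials of degree $\le r$ form a \emph{basis} of $\RM(r,m)$ — linear independence over $\mathbb F_2$ is the nontrivial half and follows from the fact that a nonzero polynomial, reduced modulo $x_i^2-x_i$, is a nonzero function — and being careful with the combinatorial identity that the index ranges $[0,r]$ and $[0,m-r-1]$ tile $[0,m]$ after the $\binom{m}{i}=\binom{m}{m-i}$ flip. Everything else is a one-line even-cardinality argument. As an alternative, one could avoid the explicit dimension formula by instead showing directly that $\RM(r,m)^\perp \subseteq \RM(m-r-1,m)$ as well — characterising $\RM(m-r-1,m)$ as the set of vectors all of whose restrictions to $(m-r)$-dimensional coordinate subcubes have even weight, which is a known combinatorial description — but the dimension-count route is shorter and self-contained given the basis fact, so that is the one I would write up.
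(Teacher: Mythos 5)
Your proposal is correct and follows essentially the same route as the paper: one containment via the observation that a monomial of degree at most $m-1$ has even support (you count the $2^{m-|S\cup T|}$ points of the subcube directly, the paper splits the sum over an omitted variable — the same idea), combined with the dimension count $\sum_{i=0}^{r}\binom{m}{i}+\sum_{i=0}^{m-r-1}\binom{m}{i}=2^m$. If anything, you are slightly more careful than the paper in flagging that linear independence of the squarefree monomials is what makes the dimension formula exact.
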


\begin{proof}
  First, we note that if a polynomial $P$ of $m$ variables has degree $< m$, then $\sum_{\vec b} P(\vec b) = 0\ (\mod 2)$. This is easy to check for monomials, as any monomial of degree $< m$ must omit some variable $x_j$, hence
  \[ \sum_{\vec b} P(\vec b) = \sum_{\vec b, b_j = 0} P(\vec b) + \sum_{\vec b, b_j = 1} P(\vec b) = 0 \ (\text{mod } 2) \]
  The result holds for all polynomials by $\mathbb F_2$-linearity of the map $[P] \mapsto \sum_{\vec b} P(\vec b)\ (\text{mod } 2)$. Now, for any polynomial $P$ of degree at most $r$ and $Q$ of degree at most $m - r - 1$, $PQ$ has degree at most $m - 1$. Hence $[P] \cdot [Q] = \sum_{\vec b} PQ(\vec b) = 0\ (\text{mod }2)$. This implies $\RM(m - r - 1, m) \subseteq \RM(r, m)^\perp$. Since $\RM(r, m)$ has the monomials as its basis, $\dim(\RM(r, m)) = \sum_{d=0}^r\binom{m}{d}$. By manipulating binomial coefficients, we can see that:
  \[
  \dim(\RM(r,m)) + \dim(\RM(m-r-1, m))
  \]\[
  =
  \sum_{d=0}^r\binom{m}{d} + \sum_{d=0}^{m-r-1}\binom{m}{d} =
  \sum_{d=0}^r\binom{m}{d} + \sum_{d=r+1}^{m}\binom{m}{d} =
  2^m = \dim(\mathbb F_2^{2^m})
  \]
  so $\RM(m-r-1,m) = \RM(r, m)^\perp$.
\end{proof}

This enables us to show that semi-triorthogonal matrices are closely related to Reed-Muller codes. We give a proof here similar to the one given for unital triorthogonal spaces by Nezami and Haah~\cite{nezami2022classification}.

\begin{theorem}
  A matrix $M$ with $n$ columns  is semi-triorthogonal if and only if its indicator polynomial $P_M$ is of degree at most $n - 4$.
\end{theorem}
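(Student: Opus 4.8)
The plan is to reduce the column condition defining semi-triorthogonality to a single statement about the evaluation vector $[P_M]\in\mathbb F_2^{2^n}$ of the indicator polynomial, and then read off the degree bound from the Reed--Muller duality of Theorem~\ref{thm:rm-dual}. The bridge between the two is an elementary counting identity: writing the rows of $M$ as $\vec y^1,\dots,\vec y^m$, for every $S\subseteq[n]$ one has
\[
  \sum_{l=1}^m \prod_{i\in S} y_i^l
  \;=\; \sum_{\vec b\in\mathbb F_2^n}\#\{\,l : \vec y^l=\vec b\,\}\cdot\prod_{i\in S}b_i
  \;\equiv\; \sum_{\vec b}P_M(\vec b)\prod_{i\in S}b_i
  \;=\; [P_M]\cdot\bigl[\,\textstyle\prod_{i\in S}x_i\,\bigr]\pmod 2 ,
\]
since modulo $2$ only the parity of each multiplicity $\#\{l:\vec y^l=\vec b\}$ survives, and that parity is $P_M(\vec b)$ by definition.

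With this identity, the proof is short. As $i,j,k$ range over $[n]$ with repetitions allowed, $x_ix_jx_k$ collapses (using $x^2=x$ in $\mathbb F_2$) to precisely the monomials $\prod_{i\in S}x_i$ with $1\le|S|\le3$; together with the constant monomial, these span $\RM(3,n)$. Hence $M$ is semi-triorthogonal iff $[P_M]$ is orthogonal to every monomial of degree $\le3$, i.e.\ $[P_M]\in\RM(3,n)^\perp$. By Theorem~\ref{thm:rm-dual} (with $r=3$, $m=n$) this orthocomplement is $\RM(n-4,n)$, which by definition is the set of evaluation vectors of polynomials of degree $\le n-4$; since the multilinear representative of a function $\mathbb F_2^n\to\mathbb F_2$ is unique, $[P_M]\in\RM(n-4,n)$ is the same as $\deg P_M\le n-4$. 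Chaining the equivalences gives the theorem (for $n\ge4$; for $n<4$ one reads $\RM(n-4,n)=\{0\}$ and checks directly).

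The step that genuinely needs care is bringing in the constant monomial: the condition with $i,j,k\in[n]$ by itself only forces $[P_M]$ orthogonal to the degree-$1,2,3$ monomials, which span a hyperplane of $\RM(3,n)$, so strictly it is a shade weaker than $[P_M]\in\RM(n-4,n)$. To close the gap one also uses that $M$ has an even number of rows --- equivalently $[P_M]\cdot[\mathbf 1]=0$, the ``$S=\emptyset$'' instance of the counting identity --- which is automatic in the unital setting in which this correspondence is normally stated (and in which the triorthogonal-space analogue is proved in~\cite{nezami2022classification}). Apart from this, the argument uses only the Reed--Muller facts already assembled for Theorem~\ref{thm:rm-dual}: the dimension formula $\dim\RM(r,m)=\sum_{d=0}^{r}\binom md$ and uniqueness of multilinear representatives.
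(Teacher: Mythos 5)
Your argument follows essentially the same route as the paper's Appendix~\ref{app:indicator} proof: turn the column conditions \eqref{eq:semi-triortho} into inner products of $[P_M]$ with evaluation vectors of monomials of degree at most $3$, and then invoke the Reed--Muller duality of Theorem~\ref{thm:rm-dual}. Two points distinguish your write-up. First, a minor stylistic one: your counting identity absorbs row multiplicities mod $2$ directly, where the paper first strips repeated pairs of rows; these are interchangeable. Second, and more substantively, you are right that the nonconstant monomials $x_ix_jx_k$ (with repetitions collapsed) only span a codimension-$1$ subspace of $\RM(3,n)$, whereas the paper's proof asserts that they span $\RM(3,n)$; the missing constant monomial really does matter. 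Indeed, with the paper's literal definition of semi-triorthogonality, a matrix consisting of a single all-zero row satisfies \eqref{eq:semi-triortho} but has indicator polynomial $\prod_i(1+x_i)$ of degree $n$, so the ``only if'' direction needs exactly the extra condition you supply --- the $S=\emptyset$ instance, i.e.\ an even number of rows mod $2$ --- or, equivalently, one must discard all-zero rows (harmless in the spider-nest setting, where they are global scalars) or work in the unital convention of Nezami--Haah. So your proof is correct given that caveat, and it patches a small gap that the paper's own proof glosses over; the rest (chaining the equivalences to get both directions, the $n<4$ edge case, uniqueness of multilinear representatives) is sound.
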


\begin{proof}
  Let $M'$ be a matrix obtained from $M$ by removing all repeated pairs of rows. $M$ is semi-triorthogonal if and only if $M'$ is, and both matrices have the same indicator polynomial. Hence, we can assume without loss of generality that $M$ has no repeated rows.

  Now, let $M$ have indicator polynomial $P = P_M$. Then $Q = x_j P$ is a polynomial with the property that $Q(\vec b) = 1$ if and only if $b_j = 1$ and $P(\vec b) = 1$, hence $\sum_{\vec b} Q(\vec b)$ is equal to the Hamming weight of the $j$-th column of $M$. This also works for products of columns: for $R = x_i x_j x_k P$, $\sum_{\vec b} R(\vec b)$ is equal to the Hamming weight of the element-wise product of the $i, j$ and $k$-th rows. Noting that $\sum_{\vec b} R(\vec b) \ (\text{mod } 2) = [x_i x_j x_k] \cdot [P]$, where $\cdot$ is the dot-product of vectors in $\mathbb F_2^{2^n}$, we see that $[P]$ must be orthogonal to all degree-3 monomials $[x_i x_j x_k]$. Since the latter span $\RM(3, n)$, $P \in RM(3, n)^\perp$, so by Theorem~\ref{thm:rm-dual} $P \in \RM(n - 4, n)$.
\end{proof}

\section{Computing transversal logical operations efficiently}\label{app:algos}

Following a technique similar to~\cite{webster2023transversal}, we will show that for fixed $(L, S)$ defining a CSS code, we can efficiently calculate matrices $H$ and $P$ making:
\[
    M = \left(\begin{array}{c|c}
      H & 0 \\ \hline
      PL & PS
    \end{array}\right)
\]
triorthogonal.

To see this, first note that we can span the space of all logical operators $D_H^\dagger$ using phase gadgets of degree at most 3, as larger phase gadgets can always be decomposed using spider nest identities~\cite{amy2017dihedral}. Hence, we can replace $H$ with $QK$, where $Q$ is a matrix whose rows all have Hamming weight 1, and $K$ has $O(k^3)$ rows consisting of all bitstrings of Hamming weight $\leq 3$. Thus $M$ becomes:
\[
    M = \left(\begin{array}{c|c}
      QK & 0 \\ \hline
      PL & PS
    \end{array}\right)
\]

Since the order of rows in $M$ is not relevant, the only function of $P$ and $Q$ is to fix the multiplicity of rows in the matrix:
\[
    N = \left(\begin{array}{c|c}
      K & 0 \\ \hline
      L & S
    \end{array}\right)
\]
Suppose $N$ is an $m \times n$ matrix. Form a new matrix $\widehat N$ over $\mathbb Z_8$ whose columns are labelled by the rows of $N$ and whose rows are labelled by all sets of at most 3 columns of $N$. Then let:
\[
  \widehat N_{S, i} = \begin{cases}
     2^{|S|-1} & \textrm{ if } \forall j \in S . N_{i,j} = 1 \\
     0 & \textrm{ otherwise}
  \end{cases}
\]

In other words, the $i$-th column of $\mathcal M$ corresponds to the inverse Fourier transform of the $j$-th row of $N$. Our goal is to find multiplicities for the rows of $N$ such that the sum of the inverse Fourier transform of each phase gadget adds to $0 \md 8$ on all sets of 1, 2, or 3 qubits. In other words, we should find a \textit{multiplicity vector} $\vec m \in \mathbb Z_8^m$ such that $\widehat N \vec m = \bm 0$.

Given a multiplicity vector $\vec m$, we can make $N$ into a triorthogonal matrix $M$ by repacing each row $i$ with $\vec m_i$ copies of that row. The set of all multiplicity vectors making $N$ into a triorthogonal matrix is therefore the kernel of $\widehat N$.

As noted in \cite{webster2023transversal}, we can find a generating set for the kernel of a $\mathbb Z_8$ matrix in polynomial time using its Howell normal form. Similarly, we can fix multiplicities of the upper or lower half of the matrix and obtain an associated inhomogeneous system of equations, which can also be solved in polynomial time.

\end{document}